\newcommand{\G}{\mathcal{G}}
\newcommand{\mat}[1]{\boldsymbol{#1}}
\newcommand{\R}{\mathbb{R}}
\newcommand{\vect}[1]{\boldsymbol{#1}}
\algrenewcommand\algorithmicrequire{\textbf{Input:}}
\algrenewcommand\algorithmicensure{\textbf{Output:}}
\gdef\@copyrightpermission{
  \begin{minipage}{0.2\columnwidth}
   \href{https://creativecommons.org/licenses/by/4.0/}{\includegraphics[width=0.90\textwidth]{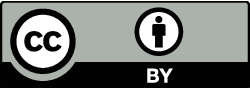}}
  \end{minipage}\hfill
  \begin{minipage}{0.8\columnwidth}
   \href{https://creativecommons.org/licenses/by/4.0/}{This work is licensed under a Creative Commons Attribution International 4.0 License.}
  \end{minipage}
  \vspace{5pt}
}
\title{Data Pricing for Graph Neural Networks without Pre-purchased Inspection}
\author{Yiping Liu}
\affiliation{UESTC \country{China}}
\affiliation{The University of Auckland \country{New Zealand}}
\email{yliu823@aucklanduni.ac.nz}
\author{Mengxiao Zhang}
\affiliation{UESTC \country{China}}
\affiliation{The University of Auckland \country{New Zealand}}
\email{mengxiao.zhang@auckland.ac.nz}
\author{Jiamou Liu}
\affiliation{
The University of Auckland
 \country{New Zealand}}
\email{jiamou.liu@auckland.ac.nz}
\author{Song Yang}
\affiliation{The University of Auckland
 \country{New Zealand}}
\email{syan382@aucklanduni.ac.nz}
\begin{abstract}
Machine learning (ML) models have become essential tools in various scenarios. Their effectiveness, however, hinges on a substantial volume of data for satisfactory performance. Model marketplaces have thus emerged as crucial platforms bridging model consumers seeking ML solutions and data owners possessing valuable data. These marketplaces leverage model trading mechanisms to properly incentive data owners to contribute their data, and return a well performing ML model to the model consumers. However, existing model trading mechanisms often assume the data owners are willing to share their data before being paid, which is not reasonable in real world. Given that, we propose a novel mechanism, named Structural Importance based Model Trading (SIMT) mechanism, that assesses the data importance and compensates data owners accordingly without disclosing the data. Specifically, SIMT procures feature and label data from data owners according to their structural importance, and then trains a graph neural network for model consumers. Theoretically, SIMT ensures incentive compatible, individual rational and budget feasible. The experiments on five popular datasets validate that SIMT consistently outperforms vanilla baselines by up to $40\%$ in both MacroF1 and MicroF1.
\end{abstract}
\keywords{Model marketplaces; data pricing; structural entropy}
\newcommand{\BibTeX}{\rm B\kern-.05em{\sc i\kern-.025em b}\kern-.08em\TeX}
\begin{document}


\pagestyle{fancy}
\fancyhead{}


\maketitle 


\section{Introduction}
In today’s digital age, data has become an essential asset, serving as the foundation for AI and machine learning advancements. To meet the increasing demand for high-quality data, a new business paradigm known as the {\em model marketplace} has emerged \cite{liu2020dealer}, exemplified by platforms like Modzy. A model marketplace facilitates the exchange between {\em model consumers}, who seek AI models for various tasks, and {\em data owners}, who possess the feature and label data necessary for model training. The marketplace purchases data from data owners, uses it to train AI models, and then sells these trained models to consumers. 
However, a key challenge in model marketplaces is determining how to properly compensate data owners for their contributions, a problem referred to as {\em data pricing}. This problem is challenging because the importance of data is difficult to evaluate. Most existing studies assume that marketplaces acquire data from data owners {\em before} paying them and use the subsequent performance improvements as a measure of data importance. For example, \cite{xu2021gradient,agarwal2019marketplace,ghorbani2019data,jia2019efficient,liu2020dealer} rely on this assumption to establish pricing mechanisms based on the marginal impact of data on model accuracy. However, this pre-purchased inspection assumption is impractical in real-world settings. Data owners are often unwilling to release their data without proper payment, fearing that the data, once disclosed, may immediately provide valuable insights to buyers, reducing the incentive to pay.

This leads to a critical question: {\em How can we measure data importance for model training without direct inspection, thereby facilitating data pricing?} Several studies have attempted to address this question by introducing exogenous metrics for measuring data importance, such as data age, accuracy, volume \cite{heckman2015pricing}, the extent of perturbations \cite{cong2020vcg,sun2022profit}, or data owners' reputation \cite{zhang2021incentive}. However, these metrics often fail to accurately reflect the contribution of data in the context of model training, particularly when dealing with complex models like Graph Neural Networks (GNNs).

Graph-structured data is prevalent in many real-world scenarios, where the relationships between entities are often as important as their attributes.
GNNs excel in tasks involving such data, capturing both node features and network structure. 
However, data ownership is often decentralized, with different entities controlling separate “pockets” of the network.  
This creates a need for a marketplace where subgraphs can be purchased and integrated to enable comprehensive model training \cite{bechtsis2022data}. 
For example, in finance, each bank holds its own subset of transaction data, but detecting fraud often requires analysing transaction flows across multiple institutions. Similarly, in healthcare, patient interactions are fragmented across hospitals, clinics, and insurance companies, forming an interconnected yet distributed network. In supply chain management, companies typically have visibility into their direct suppliers and customers, but the complete supply chain network spans many interdependent organisations. In all of these cases, the full value of the network data cannot be realised without aggregating subgraphs from multiple sources.  
For data consumers, aggregating subgraphs from multiple sources is essential for training reliable GNN models, enabling applications like fraud detection, personalized healthcare, and supply chain risk analysis.

In this paper, we advance existing research by exploring the pricing of individual data points within subgraphs for GNN training. This introduces a distinct challenge, as the value of any given node to the model's performance is highly dependent on its structural role and connectivity within the broader network. To address this, we aim to develop pricing mechanisms that capture the marginal contribution of each data point, taking into account both its local features and its position within the global network structure. Notably, this is the first work to tackle the problem of pricing graph-structured data in a model marketplace for GNNs.

In the following, we list our main contributions:
\begin{itemize}[leftmargin=*]
\item We propose {\em Structural Importance based Model Trading} (SIMT), a novel model marketplace framework for GNNs that integrates two phases: {\em data procurement} and {\em model training}. Figure~\ref{fig:framework} shows the conceptual framework of SIMT. 

\item For data procurement, we put forward a new method for assessing the importance of graph-structured data. For this we present a novel {\em marginal structural entropy} to quantify node informativeness. This method of importance assessment is integrated with an auction mechanism to select data owners and fairly compensate them based on their contributions. We prove that this mechanism is incentive compatible, individual rational, and budget feasible.

\item For model training, we introduce the method of {\em feature propagation} to impute missing feature data for unselected nodes, enabling effective learning with partial data.
We also design an {\em edge augmentation} method to enhance graph structure by adding connections involving unselected nodes, improving the GNN's ability to generalize.

\item The proposed SIMT method was evaluated on five well-established benchmark datasets, and consistently outperformed four baseline mechanisms in node classification tasks. SIMT achieved up to a 40\% improvement in MacroF1 and MicroF1 scores compared to the Greedy and ASCV methods, demonstrating its superior performance under various budget constraints. 
\end{itemize}

\begin{figure*}
    \centering
    \includegraphics[width=0.8\linewidth]{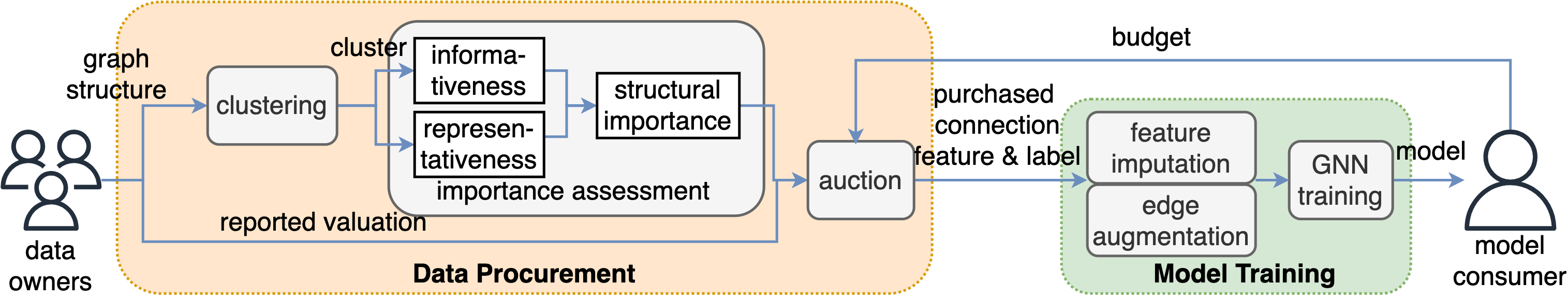}
    \caption{The framework of structural importance-based model trading (SIMT) mechanism.}
    \label{fig:framework}
\end{figure*}

\section{Related work}

Data pricing has been extensively studied in two main contexts: {\em data marketplaces} and {\em model marketplaces}. 


\noindent {\bf Data Pricing in Data Marketplaces.}
In data marketplaces, pricing mechanisms revolve around trading raw datasets or simple queries. Previous work has focused on pre-purchase decisions, where data is evaluated before it is accessed, which aligns with our setting. For instance, the importance of datasets is often quantified by metrics such as size, as explored by \cite{kushal2012pricing}, or privacy levels, as in \cite{parra2018optimized}. Other studies, such as \cite{xu2015privacy} and \cite{jaisingh2008privacy}, assess data importance based on its utility to consumers, proposing auction mechanisms and contracts to compensate data owners accordingly.

When it comes to \emph{query-based data pricing}, metrics like privacy levels directly impact the accuracy of responses, thereby influencing data value.
For instance, \cite{ghosh2011selling,roth2012conducting} propose auction mechanisms that incorporate privacy in queries, while \cite{ligett2012take} introduces a take-it-or-leave-it contract for count queries. Further work by \cite{dandekar2012privacy,zhang2020selling} expands these ideas to linear predictor queries and broader query settings.

In data marketplaces, data importance is often easily quantifiable using metrics like size or privacy levels. However, in the context of \emph{model marketplaces}, the contribution of individual data points to machine learning model performance is more complex and requires novel pricing methods.


\noindent {\bf Data Pricing in Model Marketplaces.} 
In model marketplaces, data pricing is typically based on how much a dataset improves a machine learning model's performance. \cite{abernethy2015low} introduced a theoretical framework for data pricing that balances budget constraints with model performance. Subsequent works, such as those by \cite{agarwal2019marketplace} and \cite{ghorbani2019data}, assume the model's benefit is known and focus on fairly distributing rewards among data owners. A common method for this is the \emph{Shapley value} \cite{shapley1951notes}, which compensates each data owner based on their contribution to the model.

Various studies have refined the utility function used in Shapley value calculations by incorporating additional factors. \cite{ghorbani2019data} and \cite{jia2019efficient}, for example, include $K$-nearest neighbors and privacy considerations in their utility designs. \cite{liu2020dealer} builds on this by extending the Shapley value framework to model marketplaces. Other research, such as \cite{sim2020collaborative} and \cite{ohrimenko2019collaborative}, explores utility design in \emph{collaborative machine learning} scenarios, where data owners also serve as model consumers. In these cases, utility is defined either as the sum of the model's value to the owner and its marginal value to others \cite{ohrimenko2019collaborative}, or through metrics like information gain \cite{sim2020collaborative}. \cite{xu2021gradient} and \cite{hu2020trading} further define utility based on the cosine similarity of parameters or the privacy leakage of shared model parameters.

A common limitation of these works is that they often require training models on the entire dataset before compensating data owners. In practice, this assumption is often unrealistic, as data owners are usually hesitant to contribute their data upfront without proper guarantees or compensation.


A more realistic setting, which is closer to our approach, has been explored by studies \cite{cong2020vcg,zhang2021incentive,sun2022profit}. \cite{cong2020vcg} assume that data importance is known and apply a \emph{VCG auction mechanism} to select and compensate data owners. \cite{zhang2021incentive} propose an auction mechanism that incorporates the reputation of data owners as a reflection of their contribution, while \cite{sun2022profit} design an auction that selects data owners based on their privacy requirements. Although these approaches offer valuable insights, they rely on exogenous metrics, such as reputation or privacy, which are often difficult to obtain or may not accurately reflect the intrinsic value of data for model training.

In contrast, our work proposes a novel method to \emph{measure data importance} without direct data inspection. By focusing on the structural properties of graph data and using techniques like structural entropy, we aim to create a fair and effective data pricing mechanism that overcomes the limitations of previous methods.


\noindent {\bf Comparison with FL and AL.} While {\em Federated Learning} (FL) and {\em Active Learning} (AL) are well-known paradigms for training models with distributed data, our approach differs in key ways.
{\bf (1)} In FL, each data owner trains a local model on private data, which is then aggregated into a global model while preserving privacy \cite{zhang2021survey}. SIMT, by contrast, does not require data owners to train models. Instead, data is directly provided to a central model, allowing for optimizations like data augmentation that are not possible in FL's gradient-based aggregation. This eliminates the computational burden on data owners and allows for more flexible model improvements. {\bf (2)} In AL, the model iteratively queries data points to refine learning, typically in multiple rounds \cite{ren2021survey}. SIMT, however, collects data in a single round, reducing overhead and cost. Furthermore, while AL assumes access to unlabeled data with labels provided iteratively \cite{cai2017active,zhang2021alg}, SIMT addresses the real-world challenge of compensating data owners, ensuring they are fairly rewarded for their contributions upfront.

\section{Problem formulation}
\subsection{Model marketplace for graph data}
We consider a model marketplace where a \emph{model consumer} interacts with multiple \emph{data owners} to trade graph-structured data. This data is distributed among the various data owners. Let the overall graph be represented as an attributed graph $\mathcal{G} \coloneqq (V, E, \mat{X}, \vect{y})$, where $V$ is the set of nodes, representing individual data subjects, $E \subseteq V \times V$ is the set of edges. $\mat{X} \in \mathbb{R}^{n \times m}$ is the feature matrix, where  $n$ is the number of nodes and $m$ is the dimensionality of the feature vector, and $\vect{y} \in \mathbb{R}^n$ is the label vector, where each entry corresponds to a label for each node. 
The adjacency matrix and normalised adjacency matrix of $\mathcal{G}$ are denoted as $\mat{A}$ and $\tilde{\mat{A}}$, resp.

For each node $v \in V$, let $\vect{x}_v \in \mathbb{R}^m$ and $y_v$ represent the feature vector and label of node $v$, resp. $N_v \subseteq V$ represents the set of neighbours of node $v$, and $d_v \coloneqq |N_v|$ denote the degree of node $v$.  

The graph $\mathcal{G}$ is distributed among multiple data owners, each controlling a subgraph. Let $O$ denote the set of data owners, where $o \coloneqq |O|$ represents the total number of data owners. For each data owner $i \in O$, the \emph{subgraph held by owner $i$} is represented as $\mathcal{G}_i \coloneqq (V_i, E_i, \mat{X}[V_i], \vect{y}[V_i])$, where $V_i \subseteq V$ is the set of nodes controlled by data owner $i$, $E_i = E \cap (V_i \times V_i)$ is the set of edges between nodes in $V_i$, $\mat{X}[V_i]$, and $\vect{y}[V_i]$ are the feature matrix, and label vector induced by the nodes in $V_i$, resp. Let $n_i \coloneqq |V_i|$ be the number of nodes in subgraph $\mathcal{G}_i$. 

Denote the edges within subgraphs as $\Dot{E}$ and the edges between the subgraphs as $\Ddot{E}$. We assume that $\Dot{E}\cap \Ddot{E}=\varnothing$ and then $E=\Dot{E}\cup \Ddot{E}$.
We also assume that the internal structure of each subgraph, including the features and labels of nodes, is private to the corresponding data owner. However, the connections between subgraphs (i.e., $\Ddot{E}$ the edges connecting nodes from different subgraphs) are known by the model consumer. 
Data owners are willing to sell the feature, label, and connection data for the nodes they control. 

Each data owner $i\in O$ attaches a valuation to her attribute and label data of a single node, denoted by $\theta_{i} \in \Theta$, where $\Theta$ is the set of all possible valuations. 
The valuation $\theta_{i}$ indicates the minimum payment required by the data owner to allow the use of the attribute and connection data of a single node for model training. The valuation $\theta_i$ is privately known only to the data owner, but they may report a different valuation $\theta_i'\neq \theta_i$ if it serves their interests. 
We assume that each data owner values all their data subjects {\em equally}, implying that the total valuation is linearly dependent on the number of data records. Let $\vect{\theta}_i$ be $i$'s valuation vector for all nodes, i.e., $\vect{\theta}_i \coloneqq (\theta_{i,1},\ldots,\theta_{i,n_i}) =  \theta_i \cdot \vect{1}$, where $\theta_{i,v}$ is the valuation of $i$ for node $v$. 
The valuation of all data owners form a valuation matrix, denoted by $\mat{\theta}$, which is the concatenation $\vect{\theta}_1 \mid  \cdots \mid \vect{\theta}_o \in \Theta^n$.  
The model consumer has a {\em budget}, denoted by $\beta \in \R^+$, for buying the prediction model trained on structure and attribute data. 

The model marketplace involves designing a mechanism that procures the attribute/structure data from data owners, and train a GNN model for the model consumer.

\subsection{Incentive mechanism} 
\begin{definition}
    A {\em mechanism} $M$ consists of two functions, $(\pi(\cdot),$ $p(\cdot))$, where $\pi \colon \Theta^n \to \{0,1\}^n$ is an {\em allocation function} and  $p \colon \Theta^n \to \R^n$ is a {\em payment function}. 
\end{definition}

Given a set of data owners and a model consumer, the mechanism takes the reported valuation $\mat{\theta}'\in \Theta^n$ as input, and outputs allocation result and payment result.
The allocation function and the payment function determine which nodes are selected for model training and how much to pay for the data owners, resp. 
We write the allocation result $\mat{\pi}(\mat{\theta}')$ as $(\vect{\pi}_1(\mat{\theta}'), \ldots,\vect{\pi}_o(\mat{\theta}'))$ and the payment result $\mat{p}(\mat{\theta}')$ as $(\vect{p}_1(\mat{\theta}'),\ldots,\vect{p}_o(\mat{\theta}'))$, where each $\vect{\pi}_i(\mat{\theta}'), \vect{p}_i(\mat{\theta}')$ is a $n_i$-dimensional vector with each element $\pi_{i,v},p_{i,v}$ being an allocation and payment for $i$'s node $v$. The allocation and payment of node $v$ give data owner $i$ a utility $u_{i,v}(\mat{\theta}') = (p_{i,v}(\mat{\theta}')-\theta_{i,v})\pi_{i,v}(\mat{\theta}')$. The utility of data owner $i$ is $u_i=\sum_{v\in V_i} u_{i,v}$. 
Once a node is selected, its connection, feature and label data are used for model training. 

Let $\mat{\theta}_{-i}$ denote the valuation of all data owner but $i$ and $\Theta_{-i}$ denote the set of all possible $\mat{\theta}_{-i}$. 
A mechanism $M$ should satisfy: 
\begin{itemize}[leftmargin=*]
\item {\em Incentive Compatible} (IC): Each data owner $i \in O$ gains maximum utility when truthfully reporting her valuation, i.e., 
$u_i(\theta_i,\mat{\theta}_{-i}) \geq u_i (\theta_i',\mat{\theta}_{-i}), \ \forall \theta_i,\theta_i'\in \Theta, \forall \mat{\theta}_{-i}\in \Theta_{-i}.$
\item {\em Individual Rational} (IR): Each data owner $i \in O$ gains an non-negative utility when participating in the mechanism, i.e.,  \\
$u_i(\theta_i,\mat{\theta}_{-i}) \geq 0, \ \forall \theta_i \in \Theta, \forall \mat{\theta}_{-i}\in \Theta_{-i}.$
\item {\em Budget Feasible} (BF): Total payment given to all data owners is not exceed the budget $\beta$, i.e., 
 $\sum_{i\in O} \vect{p}_i \vect{\pi}_i\leq \beta$.
\end{itemize}

\subsection{Graph neural network models} 

We use GNN as the prediction model. 
Given a graph, GNN predicts the node labels by stacking multiple layers. Let $L$ be the number of layers in a GNN model. 
The main idea is to iteratively aggregate the feature information of each node from its neighbours. Specifically, given an attributed graph $\G=(V, E, \mat{X}, \vect{y})$, and a GNN with $L$ convolution layers, at a layer $\ell \leq L$, the feature embedding $\vect{h}_v^\ell$ of node $v\in V$ is generated through aggregation and update:
\begin{itemize}[leftmargin=*]
    \item Aggregation: aggregate the feature embeddings $\vect{h}_u^\ell$ of all neighbours $u$ of $v$ by an aggregate function such as mean and sum, with trainable weights, i.e., $\vect{n}_v^\ell \coloneqq \text{Aggregator}^\ell \left(\{\vect{h}_u^\ell, \forall u\in N_v\} \right)$. 
    \item Update: update the feature embedding $\vect{h}_v^{\ell+1}$ at the next layer by an update function of the embedding $\vect{h}_v^{\ell}$ and the aggregated embeddings $\vect{n}_v^\ell$, i.e., $\vect{h}_v^{\ell+1}\coloneqq\text{Updater}^\ell\left(\vect{h}_v^\ell,\vect{n}_v^\ell\right)$. Initially, the feature embedding of node $v$ is its feature vector, i.e., $\vect{h}_v^0 \coloneqq \vect{x}_v$. 
\end{itemize}

\subsection{Optimisation problem}
As discussed in the Introduction, a key issue in determining compensation for data owners in a model marketplace is assessing the importance of their data to model training without direct inspection. 
To summarise, the problem in this paper is:

Given the model marketplace with a model consumer and several data owners, we, as a {\em data broker}, aim to design a mechanism that procures the attribute/structural data from data owners, and train a GNN model for the consumer with the following subgoals:
\begin{itemize}[leftmargin=*]
    \item assessing the importance of data to model training without disclosing the feature and label data;
    \item optimising GNN performance within the budget; and
    \item ensuring the mechanism is IC, IR, and BF.
\end{itemize}

More formally, 
let $\circ$ denote the Hadamard product operator, which selectively includes elements from $\Dot{E}, \mat{X}, \vect{y}$ according to the indicator vector $\vect{\pi} = (\vect{\pi}_1, \dots, \vect{\pi}_o)$. 
We define $f_{GNN}(\cdot)$ as the output of a GNN model trained on a selected subset of the data with the known $\Ddot{E}$.
The problem in the paper can be formulated as:
\begin{align*}
\min \quad & ||\vect{y}- f_{GNN}(\vect{\pi} \circ \G)|| \\
\text{s.t. \quad}  & u_i(\theta_i,\mat{\theta}_{-i}) \geq u_i (\theta_i',\mat{\theta}_{-i}), \ \forall \theta_i,\theta_i'\in \Theta, \forall \mat{\theta}_{-i}\in \Theta_{-i}. && \text{(IC)}  \\
 &  u_i(\theta_i,\mat{\theta}_{-i}) \geq 0, \ \forall \theta_i \in \Theta, \forall \mat{\theta}_{-i}\in \Theta_{-i} && \text{(IR)} \\
 & \sum_{i\in O} \vect{p}_i \vect{\pi}_i\leq \beta  && \text{(BF)}  \\
& \vect{p}_i\geq 0, \pi_{i,v}\in \{0,1\} \quad  \forall i\in O, \forall v\in V_i
\end{align*}

\section{Proposed method}

In this section, we propose a mechanism that procures the most contributing data and trains a GNN model using the procured data. 
By considering the correlation between graph structure, features, and labels,
we leverage the graph structure to offer insights into the contribution of the associated data. Then, we combine this contribution assessment with the data  owners' valuation in an auction mechanism to select the most cost-effective data. Subsequently, we augment the procured data using feature imputation and edge augmentation and use the augmented data to train a two-layer GNN model, which is returned to the model consumer. The overall framework is shown in Figure~\ref{fig:framework}.

\subsection{Structural importance} \label{sec:structural-importance}
We begin by evaluating the importance of data to model training without inspecting the feature and label data. 
Our solution is motivated by the observation that the structure of a graph often encodes valuable information about its features and labels. According to the well-known homophily assumption, nodes with similar features and labels are more likely to be closely connected \cite{tang2023generalized,mcpherson2001birds}. This is further validated by our case studies on five real-world graphs. We analyse the connections both within and between classes, and the results show that the number of edges within the same class is substantially higher than between different classes, as illustrated in Figure~\ref{fig:label_structure_correlation}. The strong correlation between graph structure and the associated features and labels motivates our approach to leverage the graph structure as auxiliary information in the data selection process.
Thus we propose to use the {\em structural importance} of data owner to represent her data importance.   
\begin{figure}[h]
    \centering
    \includegraphics[width=0.85\linewidth]{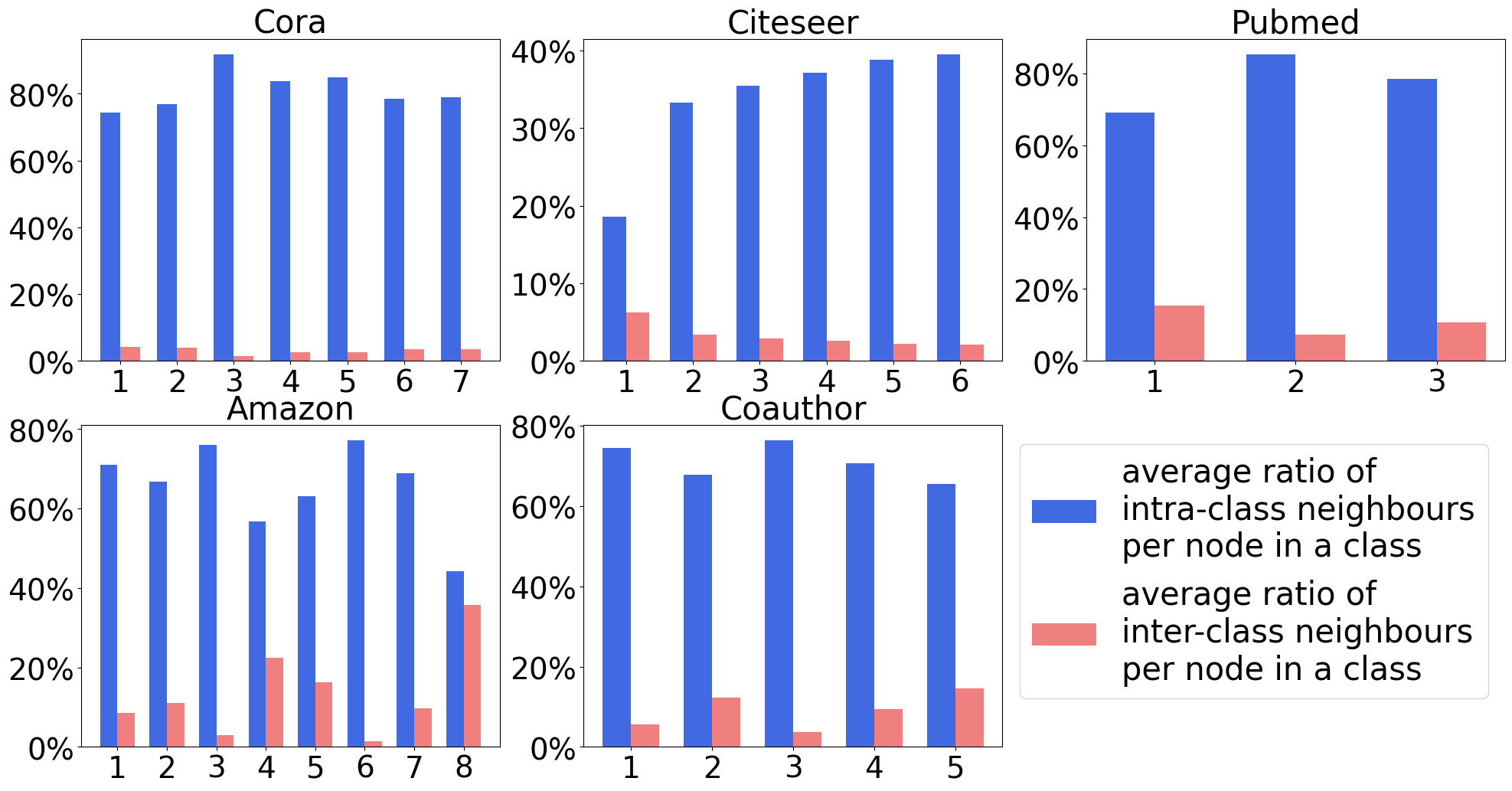}
    \caption{Proportion of intra-class and inter-class edges} 
    \label{fig:label_structure_correlation}
\end{figure} 

The sample dataset for model training should be both informative, reducing the uncertainty of the model, and representative, capturing the overall pattern of the entire dataset \cite{zhang2021alg}.  Therefore, we measure the structural importance of data owners in terms of informativeness and representativeness. 
Nevertheless, determining the informativeness and representativeness of nodes in a graph often relies on the true classes and node features, which are not available due to the absence of true feature and label data. To address this, we first use {\em structural clusters} to approximate the true classes. With this clustering, we propose the notion of {\em marginal structural entropy} to quantifies informativeness, and deploy PageRank centrality to quantify representativeness. 

\smallskip
{\em Structuring clustering. } 
A crucial tool for the structural clustering is {\em structural entropy}. 
Let $G=(V,E)$ represent a graph without attributes. 
Suppose $P \coloneqq \{C_1,C_2,\ldots,C_T\}$ is a partition of $V$, where each $C_t$ is called a {\em cluster} and $T$ is the number of clusters. {\em Structural entropy} of $G$ relative to $P$ captures the information gain due to the partition. 
For each cluster $C_t\in P$, write $d_t$ as the sum of degrees $d_v$ of all nodes $v\in C_t$.
Write $g_t$ as the number of the edges between the nodes in $C_t$ and those in the other clusters. 
The {\em structural entropy} \cite{liu2019rem} of $G$ relative to $P$ is
\begin{equation*}
\mathcal{H}_P(G) =-\sum_{t=1}^T \frac{d_t-g_t}{2|E|}\log\frac{d_t}{2|E|}.
\end{equation*}        
A greater value of $\mathcal{H}_P(G)$ means that the corresponding partition $P$ gains more information about the graph $G$ and thus $P$ is preferable.  

Given that, we would like to obtain a good partition by maximising the structural entropy $\mathcal{H}_P(G)$.
Unfortunately, maximising the structural entropy $\mathcal{H}_P(G)$ is NP-hard \cite{li2016structural,wang2023user}. As an alternative, we propose an algorithm $\mathsf{Clustering}(G)$ that harnesses the power of unsupervised GCN models \cite{wang2023user} to obtain a partition $P$.  Specifically, $\mathsf{Clustering}(G)$ first employs the Singular Value Decomposition (SVD) \cite{brunton2022data} to generate spectral features, and a classical Variational Graph Auto-Encoder (VGAE) model \cite{kipf2016variational} with reconstruction loss taking the generated spectral features as input to learn node embeddings. 
Using the obtained node embedding, $\mathsf{Clustering}(G)$ then trains a linear classifier to get a partition by maximising structural entropy. 
%

\paragraph{Structural informativeness.}
Given the learned clustering $P$, we propose the notion of {\em cluster-based marginal structural entropy} to measure the structural informativeness of data owners. Basically, the marginal structural entropy captures the information gain of a node to the clustering. The lower the marginal structural entropy of a node has, the more uncertainty this node has, and thus more information the node's data will capture.  
More formally, we define the marginal structural entropy of node $v$ as the information gain due to existence of $v$, i.e., the difference between the structural entropy of graph $G$ relative to $P$ and that of $G$ without node $v$ relative to $P'$.
Then the {\em normalised marginal structural entropy} $\epsilon_v$ of $v$ is the normalised difference in structural entropy with partition $P$ and another partition $P'$ that moves $v$ out of its cluster $C_t$, i.e., $\epsilon_v = (\mathcal{H}_P(G) - \mathcal{H}_{P'}(G) )/ \mathcal{H}_P(G)$. 
Let $n_{v,t}$ be the number of nodes that are incident to $v$ and belong to $C_t$. After calculation, we have the following (see App.~\ref{app:calculation} for the detailed calculation):  

\begin{definition}
The {\em normalised marginal structural entropy} of node $v \in C_t$ to structural entropy $\mathcal{H}_P(G)$ is 
\begin{equation*}
    \epsilon_{v} = \frac{(d_t-g_t)\log\frac{d_t}{d_t-d_v}+2n_{v,t}\log\frac{d_t-d_v}{2|E|}}{(d_t-g_t)\log\frac{d_t}{2|E|}}.
\end{equation*}
\end{definition}


\noindent A lower normalised marginal structural entropy means more structural uncertainty of $v$, making $v$ more informative. 

\paragraph{Structural representativeness.} 
We use a classical structural centrality measure, PageRank \cite{ma2008bringing}, to quantify the structural representativeness of a node. We opt for PageRank centrality due to its superior performance compared to other centrality measures, as validated in App.~\ref{app: centrality}.  The higher the PageRank centrality of a node has, the more representative the node is. 
Let $\gamma \in (0,1)$ denote the damping factor, which controls the probability of following links.
The PageRank centrality $\rho_v$ of node $v$ is:
$$
    \rho_v=\gamma \sum_{u\in N_v}\frac{\rho_u}{|N_u|}+\frac{1-\gamma}{|V|}.
$$

\paragraph{Structural importance score. } Given the clustering $P$,  the entropy $\epsilon_v$ and the PageRank centrality $\rho_v$ of each node $v$, we define the structural importance score. Specifically, we first sort all nodes in their own cluster by their entropy and PageRank values, resp. Nodes are sorted in ascending order by entropy (as lower entropy indicates higher informativeness) and in descending order by PageRank (as higher PageRank indicates greater representativeness). This ensures that more informative and representative nodes are prioritised.
For a node $v$ in cluster $C_t$, let $\text{rank}^{\text{entr}}_v$ denote its rank by entropy and $\text{rank}^{\text{pr}}_v$ denote its rank by PageRank.  We then define node $v$'s informativeness and representativeness based on these rankings as
$$\phi^{\text{info}}_v \coloneqq \frac{\text{rank}^{\text{entr}}_v}{|C_t|} \text{, and } \phi_v^{\text{rep}}\coloneqq \frac{\text{rank}^{\text{pr}}_v}{|C_t|}, \text{ resp.}$$

Finally, we define the structural importance score of a node. 
Following the approach in \cite{cai2017active,zhang2021alg}, we introduce a parameter $\alpha$ to balance representativeness and informativeness.
Intuitively, representative data helps to learn general classification patterns, while informative data is used to refine the classification boundaries.
Therefore, when the budget $\beta$ is relatively small compared to the overall valuations of data owners and the partition $P$ is complex (i.e., when $T$ is large), prioritising representative data is crucial to learning the general classification patterns. On the other hand, when the budget is relatively large and the partition is simpler, a small amount of representative data is sufficient to capture the overall pattern, allowing us to focus on acquiring more informative data to further refine the classification. 
More formally, given the average valuation $\overline{\theta}$, defined as the average of the upper and lower bounds of data valuations, we set $\alpha = \frac{1}{2}(1+\frac{\beta}{n\overline{\theta}})^{-T}$. 
The {\em structural importance score} of node $v$ is then defined as 
\begin{equation}
\label{eqn:score}
  \phi_v \coloneqq (1-\alpha)\phi^{\text{rep}}_v+\alpha \phi^{\text{info}}_v.  
\end{equation}
    

\subsection{Model trading mechanism} \label{sec:model-trading}

We propose a model trading mechanism, named {\em Structural importance based model trading (SIMT) mechanism}, which consists of two phases: (1) data procurement phase selects the most cost effective data owners, and (2) model training phase trains a GNN model on the procured data; See the workflow of SIMT in Figure~\ref{fig:framework} and the algorithm in Alg.~\ref{alg:trading}.

\paragraph{Phase 1. Data procurement.} In Phase 1, SIMT takes the attributed graph $\G$ and the valuation vector $\mat{\theta}$, and the budget $\beta$ as inputs and returns an allocation result $\vect{\pi}$ and a payment result $\vect{p}$. Firstly,  $\mathsf{Clustering}(G)$ returns a clustering $P$ of the nodes $V$ after training. Given the clustering $P$ with $T$ clusters,  the mechanism computes the structural importance score $\phi_v$ of each node $v \in V$ using Equation~\eqref{eqn:score}, which represents the importance of the node. Then an auction is conducted in each cluster $C_t\in P$ with budget $\beta/T$. In the auction for each $C_t$, nodes in $C_t$ are sorted in descending order based on the ratio $\frac{\phi_v}{\theta_{i,v}}$, where data owner $i$ owns node $v$.  
The mechanism selects the most cost-effective $k$ data until the total payment exceeds the allocated budget. The payment to data owner $i$ for node $v$ is $p_{i,v} = \min\{ \frac{\beta}{T} \frac{\phi_v}{\sum_{u=1}^k \phi_u}, \frac{\theta_{j,w}}{\phi_{w}} \phi_v\}$, where $j$ is the first data owner who has not had any data selected and $w$ is her first data in the order (if such data owner $j$ does not exist, we set $p_{i,v}$ as $\min\{\beta \phi_v/ (T\sum_{u=1}^k \phi_u), \tilde{\theta} \phi_v/\phi_{k+1}\}$, where $\tilde{\theta}$ is the upper bound of $\Theta$).
The total payment to data owner $i$ is $\sum_{v\in V_i, v\leq k}p_{i,v}$.


\begin{algorithm}[t]
\caption{The SIMT mechanism}
\label{alg:trading}
\footnotesize
\begin{algorithmic}[1]
\Require Attributed graph $\G$, data owners $O$, valuation vector $\vect{\theta}$, and budget $\beta$
\Ensure Allocation $\vect{\pi}$, payment $\vect{p}$, and trained model $f_{\text{GNN}}$ 
\State let $G=(V,\Ddot{E})$ represent the known subgraph of $\G$ without attributes.
\\
{\bf Phase 1: Data procurement}
\State get a partition $P \gets \mathsf{Clustering}(G)$   \label{ln:Clustering}  
\State compute the structural importance score $\phi_v$ for each $v\in V$ according to $P$
\State initialise $\vect{\pi}=\boldsymbol{0},\vect{p}=\boldsymbol{0}$
\For{each cluster $C_t \in P$}
\State sort the nodes $v \in C_t$ by $\frac{\phi_v}{\theta_{i,v}}$ in a descending order 
\State find the largest $k$ such that $\theta_k \leq \frac{\phi_k}{\sum_{u\leq k} \phi_u} \frac{\beta}{T}$
\For{$v\leq k$}
\State Let $i\in O$ be the data owner of node $v$
\State set $\pi_{i,v}=1$, $p_{i,v} = \min\{ \frac{\beta}{T} \frac{\phi_v}{\sum_{u=1}^k \phi_u}, \frac{\theta_{j,w}}{\phi_{w}} \phi_v\}$
\EndFor 
\EndFor 
\State procure data to get $\mat{X}_s,\vect{y}_s$, update $G$ and normalized adjacency matrix $\tilde{\mat{A}}$
\\
{\bf Phase 2: Model training}
\State initialise $\mat{X}'\gets [\mat{X}_s,\mat{0}_u]^ \intercal$ 
\While{ $\mat{X}'$ has not converged} 
\State $\mat{X}'\gets \tilde{\mat{A}} \mat{X}'$
\State $\mat{X}'\gets [\mat{X}_s,\mat{X}'_u]^ \intercal$
\EndWhile
\State do edge augmentation on $G$ and get $\overline{G}$
\State $f_{\text{GNN}}\gets \mathsf{Train} (\overline{G
},\mat{X}',\vect{y}_s) $
\end{algorithmic}
\end{algorithm}



\paragraph{Phase 2. Model training.} Given the allocation result $\vect{\pi}$, the model training phase uses the connections, features and labels of the selected data owners to train a GNN model. 
However, due to budget constraints, only a subset of features and connections can be purchased, which may not be sufficient for training a robust GNN model. 
To address this, we first impute the missing node features and augment the missing edges before training.
After acquiring the features from the selected nodes, we apply the feature propagation algorithm \cite{rossi2022unreasonable} to infer the features of the unselected nodes, producing a new feature matrix $\mat{X}'$.
Additionally, we use the $G(|V|,|E|)$ Erdos-Renyi (ER) model \cite{bollobas1998random} to generate missing edges, where $|V|$ and $|E|$ are determined by the edge density of the known graph. Then we incorporate contrastive learning \cite{oord2018representation} to mitigate the randomness introduced by the ER model, resulting in an augmented graph $\overline{G}$. 
The GNN training algorithm then takes the augmented graph $\overline{G}$, the new feature matrix $\mat{X}'$ and labels $\vect{y}_s$ as input and returns a GNN model to the consumer.  


{\em Node feature imputation.} 
Features are crucial when training GNN \cite{taguchi2021graph,spinelli2020missing}, and we apply a feature imputation method to the procured data to address missing values. Among various feature imputation methods, we choose the Feature Propagation algorithm due to its strong convergence guarantees, simplicity, speed, and scalability \cite{rossi2022unreasonable}.
We use subscripts $s$ and $u$ to denote the selected and unselected nodes, resp. Write $\mat{X}=[\mat{X}_s,\mat{X}_u]^ \intercal$ and $\vect{y}=[\vect{y}_s,\vect{y}_u]^ \intercal$. 
Also, we write the normalised adjacency matrix $\tilde{\mat{A}}$ and the graph Laplacian $\Delta$ of $\overline{\G}$ as
$\tilde{\mat{A}}=\begin{bmatrix}
\tilde{\mat{A}}_{ss} & \tilde{\mat{A}}_{su} \\
\tilde{\mat{A}}_{us} & \tilde{\mat{A}}_{uu} 
\end{bmatrix},
\Delta=\begin{bmatrix}
\Delta_{ss} & \Delta_{su} \\
\Delta_{us} & \Delta_{uu} 
\end{bmatrix}$, resp.
In the feature imputation process, the feature matrix $\mat{X}'$ is initialised with the known feature $\mat{X}_s$ and a zero matrix $\mat{0}_u$ for the unselected nodes.
The feature matrix $\mat{X}'$ is then iteratively updated as follows:
$\mat{X}^{(t)}=\begin{bmatrix}
 \mat{I} & \mat{0} \\
 \tilde{\mat{A}}_{us} & \tilde{\mat{A}}_{uu} 
 \end{bmatrix} \mat{X}^{(t-1)},$
where this process continues until the feature matrix converges.
The steady status of the feature matrix is \cite{rossi2022unreasonable}: $\lim_{t\to \infty} \mat{X}^{(t)} = \begin{bmatrix}
    \mat{X}_s\\
    -\Delta_{ss}^{-1} \tilde{\mat{A}}_{us} \mat{X}_s
\end{bmatrix}$.


\paragraph{Edge Augmentation.}
Given the critical role of message passing in GNNs, the absence of certain edges may impede this process, leading to sub-optimal model performance.  To alleviate this issue, we introduce augmented edges to  enhance message passing. 
Specifically, we employ the ER model \cite{bollobas1998random} to generate edges for data owners with multiple unselected nodes.
However, the introduction of augmented edges may inadvertently introduce noise, which could mislead the model by learning from incorrect connections. To counteract this, we integrate contrastive loss \cite{oord2018representation}, denoted by $L_{\text{ctr}}$, into the GNN training process. This loss function encourages the model to maximise the similarity between the augmented graph and the original (non-augmented) graph views.
Given a graph $\G$ and an augmented graph $\overline{\G}$, let $\vect{h}_v$ and  $\vect{h}_v'$ be the feature embeddings in $\G$ and  $\overline{\G}$, resp. 
The contrastive loss of a node $v$ is:
\begin{equation*}
    L_{\text{ctr}}(v)=-\log \frac{\exp(\vect{h}_v\cdot \vect{h}_v'/\tau)}{\sum_{u\in V}\exp(\vect{h}_v\cdot \vect{h}_u'/\tau)}, 
\end{equation*}
where $\tau$ represents the temperature parameter, which scales the similarities between the embeddings $\vect{h}_v$ and $\vect{h}_v'$. 



\subsection{Analysis}

Now we show that SIMT satisfies IC, IR and BF properties. Additionally, we analyse its time complexity.

\begin{theorem}
The SIMT mechanism is incentive compatible, individual rational and budget feasible. 
\end{theorem}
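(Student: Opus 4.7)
The plan is to establish the three properties in order of increasing technical difficulty, reusing the inequalities along the way. Budget feasibility is the most direct: the first term inside the min defining $p_{i,v}$ gives $p_{i,v}\leq \frac{\beta}{T}\frac{\phi_v}{\sum_{u=1}^{k}\phi_u}$, so summing across the $k$ selected nodes of cluster $C_t$ yields at most $\beta/T$, and summing over the $T$ clusters gives $\sum_i \vect{p}_i\vect{\pi}_i\leq \beta$. No cross-cluster accounting is needed since the cluster-level auctions operate on disjoint budgets.

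For individual rationality, an unselected node contributes zero utility, so I only need $p_{i,v}\geq \theta_{i,v}$ for each selected $v$ in cluster $C_t$. I would bound the two arguments of the min separately. The stopping condition forces $\theta_k\leq \frac{\phi_k}{\sum_{u\leq k}\phi_u}\cdot\frac{\beta}{T}$ at the cutoff index, and combining this with the descending sort on $\phi/\theta$ gives $\frac{\beta}{T}\frac{\phi_v}{\sum_{u=1}^{k}\phi_u}\geq \frac{\phi_v}{\phi_k}\theta_k\geq \theta_{i,v}$ for every selected $v\leq k$. For the second term, the marginal unserved owner $j$'s first node $w$ appears strictly after $v$ in the sort, hence $\phi_w/\theta_{j,w}\leq \phi_v/\theta_{i,v}$, which rearranges to $\frac{\theta_{j,w}}{\phi_w}\phi_v\geq \theta_{i,v}$. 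Therefore the min dominates $\theta_{i,v}$, and the utility is non-negative.

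Incentive compatibility is the main work. I would adapt the classical Myerson argument to the setting in which each owner's report $\theta_i'$ is a scalar that simultaneously rescales the sorting key of every node she owns. First I would check monotonicity: since the key $\phi_v/\theta_i'$ is strictly decreasing in $\theta_i'$, lowering the report weakly expands, and raising it weakly shrinks, the set of owner $i$'s nodes inside the selected prefix of each cluster. Next, whenever a node of $i$ remains selected after a deviation, its payment is unchanged because the first term of the min depends only on $k$ and $\{\phi_u\}_{u\leq k}$ and the second term is driven by the critical competitor, which does not belong to $i$. Finally I would split into two cases: overreporting can only drop selected nodes, exchanging positive-utility contributions for zeros; underreporting can only admit new nodes, but any such extra node lies below the truthful cutoff, so by the IR bound run at the truthful profile its payment is strictly less than $\theta_i$ and contributes negative utility. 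Summing over $i$'s nodes in each cluster shows truth-telling weakly dominates any deviation.

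The hardest step will be formalising the payment invariance under deviations that keep the allocation fixed, because the identity of the marginal unserved owner $j$ and her first node $w$ could in principle shift when $i$ reshuffles the internal order of her own nodes within the ranking. I would resolve this by fixing the cluster-wise selected set, observing that the second term in the payment min is determined solely by nodes outside this set together with the untouched reports of owners other than $i$, and concluding that it is independent of $\theta_i'$ on the interval of deviations along which the selected set is constant. With this invariance in hand, the case analysis above closes the IC proof and the theorem follows.
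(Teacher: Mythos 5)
Your proposal is correct and follows essentially the same route as the paper's own proof: budget feasibility from the first term of the payment min summed within each cluster budget $\beta/T$, individual rationality from the payment dominating the reported valuation for selected nodes, and incentive compatibility via a per-cluster case analysis in which allocation-preserving deviations leave payments unchanged (the critical term is set by a competitor, not by owner $i$) while allocation-changing deviations yield zero or non-positive marginal utility. You in fact spell out a few steps the paper only asserts (the explicit lower bounds on both terms of the min for IR, and the payment-invariance subtlety when the scalar report rescales all of an owner's nodes), but the decomposition and key arguments coincide with the paper's.
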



\begin{proof}
We first show that the mechanism is IR. In each auction of cluster $C_t$, the utility that a data owner $i$ obtains from a node $v>k$ when she truthfully reports is $u_{i,v}(\vect{\theta})=\theta_{i,v} \pi_{i,v}(\theta)-p_{i,v}(\theta)=0-0\geq 0$; the utility that a data owner $i$ obtains from $v\leq k$ is $u_{i,v}(\theta)=p_{i,v}(\theta)-\theta_{i,v} \pi_{i,v}(\theta)=\min\{ \nicefrac{(\beta\phi_v)}{(T\sum_{u=1}^k \phi_u)}, \nicefrac{(\phi_v\theta_{j,w})}{\phi_{w}} \}-\theta_{i,v}\times 1 \} \geq 0$. 
Therefore, the utility of data owner $i$ is $
\sum_{v\in V_i} u_{i,v} \geq 0$. 
Also, SIMT satisfies BF. The total payment in cluster $C_t$ is \\ $\sum_{v=1}^k \min\{ \nicefrac{(\beta \phi_v)}{(T\sum_{u=1}^k \phi_u)}, \nicefrac{(\theta_{j,w}\phi_v}{\phi_{w}} \} \leq \sum_{u=1}^k \phi_u \times \nicefrac{\beta}{(T \sum_{u=1}^k \phi_u)} = \nicefrac{\beta}{T}$. Then the total payment in $T$ clusters is $\leq \beta$, which shows BF. 

Next we show IC. Each of data subject is assigned to an auction associated with a cluster $C_t$. As the assignment is independent of the reported valuation of the data owner, we just need to show that in the auction for each cluster is IC. 
In one auction, we consider an arbitrary data owner. When she report truthfully, there are two cases regarding each of her node, either being selected or not. We discuss the two cases separately. 

\noindent{\bf (1)} Consider an arbitrary node $v$ that is selected. Assume that in the ranking, the $(k+1)$-th node is possessed by data owner $j$, i.e., the $(k+1)$-th ratio is $\nicefrac{\phi_{k+1}}{\theta_{j,k+1}}$. Note that the data owner $j$ could be $i$. 
If $i$ reports a lower valuation $\theta_{i,v}'< \theta_{i,v}$ or a higher valuation $ \theta_{i,v} < \theta_{i,v}' <  \nicefrac{\theta_{j,k+1}\phi_v}{\phi_{k+1}}$, as the marginal contribution $\phi_v$ is independent from the reported valuation, the ratio $\nicefrac{\phi_v}{\theta_{i,v}'} \geq \nicefrac{\phi_{k+1}}{\theta_{j,k+1}}$ and her ranking is still in the top $k$. Further,  the payment of $i$ for $v$ is independent from $i$'s report. As a consequence, her utility of $v$ is $u_{i,v}(\vect{\theta}_{i}',\vect{\theta}_{-i}) = u_{i,v}(\vect{\theta}_i,\vect{\theta}_{-i})$. 
If she reports a even higher valuation $\theta_{i,v}' \geq  \nicefrac{\theta_{j,k+1}\phi_v}{\phi_{k+1}}$, the ratio $\nicefrac{\phi_{k+1}}{\theta_{j,k+1}} > \nicefrac{\phi_v}{\theta_{i,v}'}$. Then her allocation becomes $0$ and her utility of $v$ is $u_{i,v}(\vect{\theta}_i',\vect{\theta}_{-i}) = 0 \leq u_{i,v}(\vect{\theta}_i,\vect{\theta}_{-i})$.
    
\noindent{\bf (2)} Consider a node $v$ that is not selected. Assume that in the ranking, the $k$-th node is possessed by data owner $j$, i.e, the $k$-th ratio is $\nicefrac{\phi_{k}}{\theta_{j,k}}$. Here, $j$ could also be $i$. 
If $i$ reports a higher valuation $\theta_{i,v}' > \theta_{i,v}$ or a lower valuation $\nicefrac{\theta_{j,k}\phi_v}{\phi_{k}} \leq \theta_{i,v}' < \theta_{i,v}$, the ratio $\nicefrac{\phi_k}{\theta_{j,k}} \geq \nicefrac{\phi_v}{\theta_{i,v}'}$, and $v$'s ranking is still not among the first $k$. Then $i$'s utility of $v$ is $u_{i,v}(\vect{\theta}_i',\vect{\theta}_{-i}) = u_{i,v}(\vect{\theta}_i,\vect{\theta}_{-i})=0$. 
If $i$ reports a much lower valuation  $\theta_{i,v}'<  \nicefrac{\theta_{j,k}\phi_v}{\phi_{k}}$, and her ranking is among the first $k$. Her utility $u_i(\vect{\theta}_i',\vect{\theta}_{-i})  = \theta_i\!- \!\min\{ \frac{\beta}{T} \frac{\phi_v}{\sum_{u=1}^k \phi_u}, \frac{\theta_{j,w}}{\phi_{w}} \phi_v\} \! \leq \! 0 = u_i(\vect{\theta}_i,\vect{\theta}_{-i}).$
\end{proof}

\noindent{\bf Time complexity.}
Given a GNN, recall $m$ is the dimensions of the input and let  $m_{\text{h}}$ be the hidden layers. 
The computational complexity of a typical GNN is $O(|E| m + |V| m m_{\text{h}})$.  The computational complexity of SIMT is $O((|E| m + |V| m m_{\text{h}})+(|E| m + |V| m m_{\text{h}})+(|V|+|E|))$, where the first one terms correspond to the complexity of training the GNN,  while the last two terms account for the computation of clustering and PageRank centrality resp.

\section{Experiment}
We conduct experiments to validate the performance of proposed SIMT mechanism in terms of node classification accuracy. 
(1) To demonstrate the overall performance of SIMT, we compare it with multiple baselines under different budgets. (2) To underscore the impact of each component, we perform a detailed ablation study.

\subsection{Experiment setup} \label{sec:setup}

\paragraph{Dataset.} 
Five widely-used datasets are included in our experiments: Cora, Citeseer, Pubmed, Amazon and Coauthor \cite{kipf2016semi,cai2017active,rossi2022unreasonable,shchur2018pitfalls}. The dataset statistics are listed in Table~\ref{tab:dataset_statistics} in App.~\ref{app: dataset}.
For each dataset, we randomly sample $15\%$ of the data as test set, which remains untouched during data procurement. This set is consistent across all baselines.
Once getting the selected data from auction, we further split $80\%$ as training data and remaining $20\%$ for validation.
%
To accommodate different real-world scenarios, we follow the setup in existing studies \cite{hu2020trading,jia2019efficient,liu2020dealer,ohrimenko2019collaborative,sim2020collaborative,xu2021gradient} to 
validate SIMT on various datasets and varying hyperparameters.


\begin{table*}[]
\caption{Node classification performance under different budgets}
\label{tab:main-performance}
\resizebox{\textwidth}{!}{%
\begin{tabular}{|cc|cc|cc|cc|cc|cc|cc|c|c|}
\hline
\multicolumn{2}{|c|}{budget} & \multicolumn{2}{c|}{$50$} & \multicolumn{2}{c|}{$100$} & \multicolumn{2}{c|}{$150$} & \multicolumn{2}{c|}{$200$} & \multicolumn{2}{c|}{$250$} & \multicolumn{2}{c|}{$300$} & - & - \\ \hline
\multicolumn{2}{|c|}{metric} & MacroF1 & MicroF1 & MacroF1 & MicroF1 & MacroF1 & MicroF1 & MacroF1 & MicroF1 & MacroF1 & MicroF1 & MacroF1 & MicroF1 & \begin{tabular}[c]{@{}c@{}}average\\ accuracy\end{tabular} & $\frac{\text{ave. accuracy}}{\# \text{bought.items}}$ \\ \hline
\multicolumn{1}{|c|}{\multirow{5}{*}{\small Cora}} & Greedy & $12.3\pm 4.2$ & $21.5\pm 6.3$ & $17.6\pm 5.6$ & $29.1\pm 11.1$ & $20.6\pm 6$ & $33.9\pm 9.8$ & $23.6\pm 6$ & $36.7\pm 9.8$ & $26.5\pm 5.9$ & $40.1\pm 9.3$ & $28.0\pm 4.7$ & $41.5\pm 8.8$ & 33.8 & 0.16 \\
\multicolumn{1}{|c|}{} & ASCV & $10.9\pm 4.5$ & $21.0\pm 6.6$ & $16.2\pm 5.3$ & $29.1\pm 8.2$ & $23.8\pm 6.9$ & $36.0\pm 5.1$ & $23.7\pm 4.5$ & $36.0\pm 4.9$ & $28.1\pm 5.6$ & $38.5\pm 5.0$ & $32.4\pm 8.9$ & $43.9\pm 6.2$ & 34.1 & 0.20 \\
\multicolumn{1}{|c|}{} & Greedy(P) & $24.6\pm 10.6$ & $36.1\pm 14.4$ & $29.0\pm 9.6$ & $40.6\pm 12$ & $32.3\pm 8.5$ & $44.5\pm 10$ & $36.0\pm 9.8$ & $48.8\pm 8.4$ & $37.3\pm 9.3.5$ & $48.9\pm 8$ & $38.6\pm 10$ & $50.8\pm 8.3$ & 45.0 & 0.22 \\
\multicolumn{1}{|c|}{} & ASCV(P) & $21.9\pm 5.8$ & $34.8\pm 6.8$ & $37.2\pm 11$ & $46.9\pm 9.8$ & $47.3\pm 7$ & $53.7\pm 6.5$ & $55.3\pm 5.9$ & $61.2\pm 4.9$ & $60.3\pm 5.1$ & $65.2\pm 5.1$ & $65.0\pm 6.3$ & $68.1\pm 5.7$ & 55.0 & 0.33 \\
\multicolumn{1}{|c|}{} & \textbf{SIMT} & $\bm{36.1\pm 10.7}$ & $\bm{48.4\pm 9.5}$ & $\bm{46.7\pm 8.8}$ & $\bm{55\pm 6.7}$ & $\bm{56.8\pm 10}$ & $\bm{62.5\pm 8.6}$ & $\bm{62.2\pm 5.8}$ & $\bm{67.8\pm 3.3}$ & $\bm{63.4\pm 5.3}$ & $\bm{69.6\pm 2.8}$ & $\bm{66.7\pm 6.8}$ & $\bm{71.8\pm 4.5}$ & $\bm{62.5}$ & $\bm{0.37}$ \\ \hline
\multicolumn{1}{|c|}{\multirow{5}{*}{\small Citeseer}} & Greedy & $6.7\pm 2.7$ & $17.7\pm 7.2$ & $11.0\pm 5.3$ & $22.6\pm 8.3$ & $14.4\pm 5.1$ & $24.5\pm 7.8$ & $18.7\pm 6$ & $29.0\pm 7.8$ & $20.5\pm 7.4$ & $31.0\pm 8.1$ & $21.9\pm 6.6$ & $32.6\pm 8.3$ & 24.6 & 0.12 \\
\multicolumn{1}{|c|}{} & ASCV & $8\pm 3.4$ & $17.4\pm 7.5$ & $12.9\pm 5.7$ & $21.8\pm 7.4$ & $16.9\pm 4.8$ & $26.5\pm 4.7$ & $20.8\pm 5.2$ & $29.6\pm 6.6$ & $24.3\pm 5.3$ & $34.4\pm 5.2$ & $28.1\pm 5.2$ & $36.8\pm 6.4$ & 27.8 & 0.17 \\
\multicolumn{1}{|c|}{} & Greedy(P) & $15.3\pm 7.5$ & $24.5\pm 9.3$ & $21.5\pm 8.5$ & $30.7\pm 9.6$ & $23.9\pm 7.9$ & $33.2\pm 8$ & $27.4\pm 10$ & $36.5\pm 7.7$ & $29.8\pm 9.8$ & $39.0\pm 7.7$ & $32.9\pm 11.1$ & $42.4\pm 8.5$ & 34.4 & 0.17 \\
\multicolumn{1}{|c|}{} & ASCV(P) & $19.8\pm 7.1$ & $27.2\pm 9.5$ & $32.2\pm 5.9$ & $38.3\pm 9.2$ & $38.9\pm 4$ & $43.8\pm 6.3$ & $42.4\pm 7$ & $48.0\pm 7.6$ & $46.7\pm 2.7$ & $50.5\pm 4.5$ & $48.6\pm 4$ & $51.9\pm 6$ & 43.3 & 0.26 \\
\multicolumn{1}{|c|}{} & \textbf{SIMT} & $\bm{28.7\pm 5.8}$ & $\bm{39.4\pm 6.3}$ & $\bm{37.0\pm 5.6}$ & $\bm{47.4\pm 6.1}$ & $\bm{41.7\pm 6}$ & $\bm{50.2\pm 6.3}$ & $\bm{45.4\pm 5.2}$ & $\bm{53.4\pm 6.7}$ & $\bm{47.2\pm 5.1}$ & $\bm{55.8\pm 3.8}$ & $\bm{50.0\pm 4.9}$ & $\bm{57.9\pm 4.8}$ & $\bm{50.7}$ & $\bm{0.30}$ \\ \hline
\multicolumn{1}{|c|}{\multirow{5}{*}{\small Pubmed}} & Greedy & $15.7\pm 3.8$ & $30.8\pm 9.8$ & $16.3\pm 3.4$ & $30.8\pm 9.7$ & $19.2\pm 5.9$ & $34.0\pm 8.9$ & $20.8\pm 7$ & $35.0\pm 8.4$ & $22.6\pm 9.1$ & $37.5\pm 8.9$ & $21.2\pm 8.3$ & $35.8\pm 9.6$ & 34.0 & 0.14 \\
\multicolumn{1}{|c|}{} & ASCV & $17.2\pm 4.3$ & $33.7\pm 9$ & $17.4\pm 4.7$ & $33.9\pm 9.2$ & $17.0\pm 4$ & $33.8\pm 9.1$ & $18.1\pm 5.7$ & $34.2\pm 9.3$ & $16.8\pm 3.8$ & $33.8\pm 9$ & $18.1\pm 6.2$ & $34.5\pm 9.8$ & 34.0 & 0.16 \\
\multicolumn{1}{|c|}{} & Greedy(P) & $19.7\pm 7.7$ & $34.1\pm 9.4$ & $25.1\pm 12.1$ & $39.3\pm 10.5$ & $27.6\pm 13.3$ & $40.9\pm 11.4$ & $29.9\pm 14.9$ & $43.0\pm 13.3$ & $30.6\pm 15.4$ & $43.4\pm 13.6$ & $31.2\pm 14.4$ & $43.9\pm 12.6$ & 40.8 & 0.17 \\
\multicolumn{1}{|c|}{} & ASCV(P) & $20.0\pm 9.8$ & $35.8\pm 11.7$ & $21.4\pm 12.5$ & $36.3\pm 12.5$ & $22.1\pm 13.5$ & $36.9\pm 13.2$ & $23.3\pm 15.7$ & $37.8\pm 14.7$ & $23.4\pm 16.1$ & $37.7\pm 14.7$ & $23.9\pm 16$ & $37.9\pm 14.2$ & 37.0 & 0.17 \\
\multicolumn{1}{|c|}{} & \textbf{SIMT} & $\bm{22.7\pm 7.2}$ & $\bm{38.5\pm 7.6}$ & $\bm{29.9\pm 13.5}$ & $\bm{44.0\pm 10.6}$ & $\bm{36.7\pm 16.1}$ & $\bm{49.2\pm 11.9}$ & $\bm{42.0\pm 14.2}$ & $\bm{53.3\pm 10.5}$ & $\bm{43.1\pm 15.1}$ & $\bm{54.4\pm 11.6}$ & $\bm{50.2\pm 16.7}$ & $\bm{60.2\pm 11.1}$ & $\bm{49.9}$ & $\bm{0.25}$ \\ \hline
\multicolumn{1}{|c|}{\multirow{5}{*}{\small Amazon}} & Greedy & $6\pm 3.2$ & $15.5\pm 6$ & $8.7\pm 5.4$ & $18.2\pm 6.8$ & $9.7\pm 4.5$ & $17.8\pm 5.8$ & $10.7\pm 4.8$ & $18.5\pm 6.4$ & $11.3\pm 5.3$ & $19.6\pm 5.6$ & $13.8\pm 6.1$ & $21.3\pm 5.5$ & 18.5 & 0.08  \\   
\multicolumn{1}{|c|}{} & ASCV & $4.4\pm 1.5$ & $17.3\pm 6.6$ & $7.1\pm 3.6$ & $20.9\pm 5.3$ & $9.6\pm 5.2$ & $25.2\pm 3.4$ & $11.4\pm 7.4$ & $27.1\pm 5.6$ & $11.9\pm 7.2$ & $27.3\pm 5$ & $12.6\pm 7.2$ & $29\pm 6$ & 24.5 & 0.12  \\ 
\multicolumn{1}{|c|}{} & Greedy(P) & $19.6\pm 5.5$ & $24.9\pm 6.2$ & $23.2\pm 8$ & $29.3\pm 8.8$ & $23.4\pm 8.4$ & $29.6\pm 9.1$ & $24.8\pm 8.2$ & $31.2\pm 8.3$ & $24.9\pm 7.8$ & $31.7\pm 8.9$ & $25.4\pm 7.4$ & $32.6\pm 8.0$ & 30.0 &0.13  \\ 
\multicolumn{1}{|c|}{} & ASCV(P) & $18.2\pm 7.5$ & $31.1\pm 8.0$ & $21.7\pm 7.3$ & $34.8\pm 8.3$ & $23.9\pm 8.1$ & $36.3\pm 8.9$ & $26.7\pm 7.0$ & $41.9\pm 8.9$ & $29.7\pm 9.4$ & $43.4\pm 9.3$ & $30.2\pm 9.1$ & $44.7\pm 9.6$ & 38.7  & 0.20  \\
\multicolumn{1}{|c|}{} & \textbf{SIMT} & $\bm{30\pm 7.1}$ & $\bm{38.9\pm 7.9}$ & $\bm{38.1\pm 6.7}$ & $\bm{48.4\pm 6.6}$ & $\bm{40.7\pm 7.4}$ & $\bm{50.6\pm 6.7}$ & $\bm{44.2\pm 7.1}$ & $\bm{57.1\pm 5.9}$ & $\bm{46.2\pm 4.3}$ & $\bm{57.8\pm 3.6}$ & $\bm{51.7\pm 7.8}$ & $\bm{60.3\pm 6.5}$  & $\bm{52.2}$  & $\bm{0.28}$ \\ \hline
\multicolumn{1}{|c|}{\multirow{5}{*}{\small Coauthor}} & Greedy & $8.5\pm 5.6$ & $24.2\pm 19.2$ & $9.2\pm 5.9$ & $24.5\pm 19.4$ & $11.7\pm 6.5$ & $29.0\pm 19.8$ & $12.9\pm 7.5$ & $29.8\pm 20.2$ & $15.4\pm 9.0$ & $34.4\pm 20.2$ & $17.1\pm 9.0$ & $35.2\pm 20.1$ & 29.5  & 0.12  \\ 
\multicolumn{1}{|c|}{} & ASCV & $10.3\pm 4.3$ & $34.0\pm 17.7$ & $11.1\pm 4.8$ & $37.1\pm 17.7$ & $11.8\pm 5.1$ & $37.4\pm 17.7$ & $12.9\pm 5.3$ & $38.7\pm 16.7$ & $16.1\pm 6.4$ & $44.7\pm 14.7$ & $16.3\pm 6.9$ & $45.4\pm 15.1$ & 39.5 & 0.18 \\  
\multicolumn{1}{|c|}{} & GreedyP & $18.0\pm 8.4$ & $34.2\pm 19.3$ & $24.7\pm 11.3$ & $39.6\pm 20.0$ & $27.5\pm 12.3$ & $42.0\pm 21.0$ & $27.5\pm 12.3$ & $42.1\pm 21.4$ & $29.4\pm 13.9$ & $44.4\pm 22.3$ & $30.0\pm 14.1$ & $44.1\pm 22.2$ & 41.1  & 0.16 \\
\multicolumn{1}{|c|}{} & ASCVP & $16.8\pm 9.1$ & $43.5\pm 18.6$ & $26.2\pm 9.2$ & $54.1\pm 13.9$ & $25.6\pm 12.2$ & $51.7\pm 18.9$ & $30.7\pm 11.7$ & $57.6\pm 15.3$ & $30.8\pm 11.7$ & $58.3\pm 15.4$ & $32.1\pm 10.7$ & $59.3\pm 15.1$ & 54.1 & 0.24  \\
\multicolumn{1}{|c|}{} & \textbf{SIMT} & $\bm{24.9\pm 8.9}$ & $\bm{55.5\pm 5.6}$ & $\bm{29.8\pm 9.8}$ & $\bm{57.9\pm 8.2}$ & $\bm{33.5\pm 8.4}$ & $\bm{61.5\pm 6.8}$ & $\bm{39.1\pm 7.6}$ & $\bm{63.4\pm 6.4}$ & $\bm{42.8\pm 12.0}$ & $\bm{65.2\pm 8.0}$ & $\bm{46.7\pm 11.6}$ & $\bm{68.8\pm 6.3}$ & $\bm{62.0}$ & $\bm{0.30}$  \\ \hline
\end{tabular}%
}
\end{table*}

{\it Data valuations.}
We generate a set of random numbers to represent the data valuations. The valuations  are sampled at random i.i.d. following a series of normal distributions $\mathcal{N}(\mu,\,\sigma^{2})$. 
We get a $\mu$ drawn from $\mathcal{U}[0.8,1.2]$ for each class to capture the difference in valuations between classes. Then for each data owner, we set the valuation of each data subject as the mean of the generated valuations of her data subjects.
We set $\sigma=0.1$. The effect of different $\sigma$s on performance is investigated and the results are in App.~\ref{app:deviation}.
To ensure all valuations are non-negative, we use a resample trick \cite{burkardt2014truncated}. The generated valuations are in the range $[0,2]$. Note that when the domain is different, we could scale it into $[0,2]$. 


{\it Budget.}
We set the budget in $\{50,100,150,200, 250, $ $300\}$. Given that the data valuation range is $[0,2]$,
the number of selected data is approximately from $50$ to $300$, which is aligned with the setup of the studies on label selection e.g. \cite{zhang2021alg,cai2017active}.


{\it Structural clustering.}
Here, we give the configuration of the model used for $\mathsf{Clustering}(G)$.
We deploy SVD \cite{brunton2022data} to generate spectral node features, VGAE \cite{kipf2016variational} to learn node embeddings followed by a linear classifier to learn the partition. 
We set the hidden size as $32$, the learning rate as $0.01$, the L2 regularisation as $5\times 10^{-4}$. The total training budget is $400$ epochs.
The clustering model is initialised to solely minimise the reconstruction loss.
We repeat this process $100$ epochs to comprehensively capture the graph structure information. 
Using the obtained node embeddings, we train the linear classifier to learn a partition, maximising the structural entropy. This process is repeated $300$ epochs to obtain a robust partition.

{\it GNN model.}
We employ classical GNN models as $f_{\text{GNN}}$ to learn node classification. 
Following the configurations of \cite{kipf2016semi}, we set the hidden size as $32$, the total training budget as $200$, the learning rate as $0.01$ and the L2 regularisation as $5 \times 10^{-4}$.
The models are optimised with minimising both reconstruction loss $L_{\text{recon}}$ and classification loss $L_{\text{class}}$ on the train data.
We repeat $10$ training iterations with different random seeds and report the average performance.
To mitigate the impact of randomness in train-validation splitting,  each training iteration creates $10$ train-validation splits, trains $10$ independent models according to the split and reports the best model according to their performance on the validation set. 
Ultimately, we evaluate the model performance using the test data.
In other words, each experimental result is derived from $100$ runs.
We present the results using a GCN model and defer the exploration on the effects of different GNN architectures in App.~\ref{app:models}. 

{\it Subgraph.}
Each data owner possesses a subgraph with at least one node. 
We vary both the number $o$ of data owners and the size $n_i$ of theire subgraphs. We first fix the number of data owners at $10$, and vary the subgraph size within $\{20,40,60,80\}$ to investigate the effect of subgraph size.
Next, we fix the subgraph size at $80$, and vary the number of data owners within $\{5,10,15,20\}$ to investigate the effect of the number of data owners.  
The comparison results are presented in App.~\ref{app:subgraph parameter}. 



{\it Baselines.}
To validate the overall performance of the SIMT, we benchmark it against four baseline mechanisms. These baselines incorporate different methods for assessing data importance within our proposed model trading framework. The baselines are: 
%
%
\begin{itemize}[leftmargin=*]
    \item Greedy~\cite{singer2010budget}: The Greedy mechanism treats all data as equally important and procures data based solely on the valuations of data owners. 
    No feature propagation is applied. 
    \item ASCV~\cite{chandra2021initial}: The ASCV mechanism first trains a VGAE model on the graph to learn node embeddings with optimising reconstruction loss, and evaluates data importance by the nodes' contribution to the reconstruction loss. 
    The greater the contribution of the node, the more important the corresponding owner's data is.
    Then the auction procures data according to the ratio of data importance to valuation. 
    No feature propagation is applied. 
    \item Greedy(P): The Greedy(P) mechanism is the same as the Greedy except for that a feature imputation is applied. 
    \item ASCV(P): The ASCP(P) mechanism is the same as the ASCV except for that a feature imputation is applied.
\end{itemize}
Note that ASCV is originally designed using various techniques to evaluate data importance. However, in the absence of features, only the VGAE technique can be directly applied in our scenario. For fair comparison, 
we redesign all baselines to avoid pre-purchase inspection of data, and set same seeds for all places involving randomness, including edge augmentation and model initialisation.

{\it Implementation.}
All experiments are conducted on a single machine with AMD Ryzen $9$ $5900$X $12$-Core CPU, and NVIDIA GeForce RTX $3090$ GPU. The code is implemented using Python $3.10$ and Pytorch $2.0.1$. Our code is available in the supplementary material. 

\subsection{Overall performance} 

The experiment results are presented in Table~\ref{tab:main-performance}. Here, we fix the number of data owners at $10$, with each owner holding $80$ data subjects. 
As shown in the table, SIMT consistently outperforms all baselines under all budgets. Compared to the vanilla Greedy and ASCV, SIMT improves up to $40\%$ in both MacroF1 and MicroF1. Also, the last column shows the contribution per node, i.e., calculated as the average accuracy divided by the number of purchased nodes. The results consistently show that the contribution per node of SIMT is higher than that of all baselines, 
demonstrating the data selected by SIMT is more valuable. This validates the effectiveness of our structural importance assessment method in Sec.~\ref{sec:structural-importance}.

Table~\ref{tab:main-performance} also shows that the ASCV/ASCV(P) mechanism outperforms the Greedy/Greedy(P) mechanism. 
This could be attributed to that both ASCV and ASCV(P) assess the importance of data based on their structural contribution to the reconstruction loss, which, in a way, reflects structural uncertainty.
However, ASCV and ASCV(P) do not perform as well as SIMT. 
This discrepancy underscores the effectiveness of structural importance score.

%
Lastly, Table~\ref{tab:main-performance} shows that the ASCV(P) and Greedy(P) outperform their vanilla versions by up to $20\%$ 
in both MacroF1 and MicroF1. This validate the need of feature imputation. 

Same trend is observed in the  scenario with $n_i=1, \forall i\in O$. See more details in Table~\ref{apptab:GCN} of App.~\ref{app:models}.



\subsection{Ablation study} \label{Sec:ablation}
To explore the impact of each component in SIMT on its performance, we conduct ablation studies across the five datasets and present the average test accuracy.
As shown in Table~\ref{tab:ablation-study}, the four components, i.e., structuring clustering (clust), structural informativeness (info), structural representativeness (rep), and edge augmentation (edge aug), distinctly enhances SIMT's performance. In general, clust plays contributes the most among all components, which underscores the crucial role of structural clustering. Without clustering, there is a high probability that the procured data are unevenly distributed across the classes, leading to a biased training dataset. The second contributor is edge augmentation, which highlights the role of missing edge augmentation in the training process. Without edge augmentation, the message passing process is likely hindered, resulting in sub-optimal performance.

We also compare the effect of data valuations deviation (in App.~\ref{app:deviation}), graph centrality metrics (in App.~\ref{app: centrality}), subgraph parameters (in App.~\ref{app:subgraph parameter}), and GNN architectures (in App.~\ref{app:models}).


\begin{table}[]
\centering
\scriptsize
\caption{The impact of each component}
\label{tab:ablation-study}
\begin{tabular}{|c|cc|cc|cc|cc|cc|l}
\cline{1-11}
dataset       & \multicolumn{2}{c|}{Cora} & \multicolumn{2}{c|}{Citeseer} & \multicolumn{2}{c|}{Pubmed} & \multicolumn{2}{c|}{Amazon} & \multicolumn{2}{c|}{Coauthor} & \multicolumn{1}{c}{} \\ \cline{1-11}
metric        & acc.          & $\Delta$  & acc.            & $\Delta$    & acc.           & $\Delta$   &  acc.           & $\Delta$   & acc.            & $\Delta$    &                      \\ \cline{1-11}
no cluster    & 55.9          & -6.6      & 48.0            & -2.7        & 41.8           & -8.1       &  35.4          & -16.8       & 45.5            & -16.5        &                      \\
no rep        & 61.0          & -1.5      & 50.6            & -0.1        & 48.3           & -1.6       &  50.5          & -1.7       &  56.9           & -5.1        &                      \\
no info       & 62.3          & -0.2      & 49.5            & -1.2        & 47.8           & -2.1       &  50.9        & -1.3       &  55.1           & -6.9        &                      \\
no edge aug   & 59.2          & -3.3      & 46.2            & -4.5        & 48.7           & -1.2       &  41.1          & -11.1       &  58.3           & -3.7        &                      \\ \cline{1-11}
\textbf{SIMT} & $\bm{62.5}$   &           & $\bm{50.7}$     &             & $\bm{49.9}$    &            & $\bm{52.2}$    &            & $\bm{62.0}$     &             &                      \\ \cline{1-11}
\end{tabular}
\end{table}





\section{Conclusion}
In this paper, we aim to design a mechanism that properly incentives data owners to contribute their data, and returns a well performing GNN model to the model consumer for model marketplaces. In particular, we focus on the question of how we can measure data importance for model training without direct inspection. We propose SIMT, which consists of a data procurement phase and a model training phase. For data procurement, we incorporate a structure-based importance assessment method into an auction mechanism. For model training, we introduce and design two effective methods to impute missing data. As a result, SIMT ensures no data disclosure and incentive properties. Experimental results demonstrate that SIMT outperforms the baselines by up to $40\%$ in accuracy. 
To the best of our knowledge, SIMT is the first model trading mechanism addressing the data disclosure problem. In the future, we will further consider the potential privacy leakage in the trained model.



\begin{acks}
This work is supported by National Natural Science Foundation of China $\#62172077$ and China Scholarship Council Grant $\#201906030067$. 
\end{acks}



\newpage

\balance

\bibliographystyle{ACM-Reference-Format} 
\bibliography{sample}

\clearpage

\appendix

\noindent {\bf \Large APPENDIX}

\section{Normalised marginal structural entropy} \label{app:calculation}
\noindent The calculation of the normalised marginal structural entropy is as follows. The marginal structural entropy is: 
{\footnotesize
\begin{align*} 
    &\mathcal{H}_P(G) - \mathcal{H}_{P'}(G)  \\
    =& \frac{d_t-g_t}{2|E|}\log\frac{d_t}{2|E|}-\frac{d_t - g_t - 2x_{v,t}}{2|E|}\log \frac{d_t-d_v}{2|E|} \\
    =& \frac{d_t - g_t}{ 2|E|} \left(\log \frac{d_t}{2|E|} -  \log \frac{d_t-d_v}{2|E|}\right) + \frac{2x_{v,t}}{2|E|} \log \frac{d_t-d_v}{2|E|} \\
    =& \frac{d_t - g_t}{ 2|E|} \log \frac{d_t}{d_t-d_v} + \frac{2x_{v,t}}{2|E|} \log \frac{d_t-d_v}{2|E|}.
\end{align*}
}
\noindent We normalise this value and get 
\begin{align*}
    \epsilon_v & = (\mathcal{H}_P(G) - \mathcal{H}_{P'}(G) )/ \mathcal{H}_P(G) \\
    & = \frac{\frac{d_t - g_t}{ 2|E|} \log \frac{d_t}{d_t-d_v} + \frac{2x_{v,t}}{2|E|} \log \frac{d_t-d_v}{2|E|} }{ \frac{d_t-g_t}{2|E|}\log\frac{d_t}{2|E|} } \\
    & = \frac{(d_t-g_t)\log\frac{d_t}{d_t-d_v}+2x_{v,t}\log\frac{d_t-d_v}{2|E|}}{(d_t-g_t)\log\frac{d_t}{2|E|}}.
\end{align*}
\section{Dataset statistics} \label{app: dataset}
\begin{table}[h]
\centering
\caption{Dataset statistics}
\label{tab:dataset_statistics}
\footnotesize
\begin{tabular}{|c|cccc|}
\hline
Dataset & $\#$ nodes & $\#$ edges & $\#$ features & $\#$ classes \\ \hline
Cora & 2,708 & 10,556 & 1,433 & 7 \\ \hline
Citeseer & 3,327 & 9,104 & 3,703 & 6 \\ \hline
Pubmed &19,717 & 88,648 & 500 & 3 \\ \hline
Amazon &  7,650 & 238,162 & 745 & 8 \\ \hline
Coauthor &  34,493 & 495,924 & 8,415 & 5 \\ \hline
\end{tabular}%
\end{table}

\section{Comparison of data valuations deviation} \label{app:deviation}
We conduct an experiment to evaluate the effect of the standard deviation in data valuation distribution on scenario with $n_i=1, i\in O$. 
The experiment is evaluated on Cora, Citeseer and Pubmed datasets with standard deviation $\sigma$ ranging from $0.01$ to $0.2$.
In Table~\ref{tab:deviation}, we list the MacroF1 and MicroF1 performance of our mechanism running on varying standard deviations of data valuations. We also list the number of bought data to compare the effect of $\sigma$ on data quantity.
We observe that the number of bought data increases as $\sigma$ increases. The increasing data quantity brings in more information for training GNN model and thus improves the model performance. 
To balance the effect of data quantity, we fix $\sigma=0.1$ for all the other experiments. To eliminate the randomness induced by unknown subgraphs, we conduct experiments on the case where all data owners are individuals.

\begin{table}[h]
\centering
\caption{The impact of different standard deviations for $n_i=1$}
\label{tab:deviation}
\resizebox{\linewidth}{!}{%
\begin{tabular}{c|ccc|ccc|ccc|}
\cline{2-10}
 & \multicolumn{3}{c|}{Cora} & \multicolumn{3}{c|}{Citeseer} & \multicolumn{3}{c|}{Pubmed} \\ \hline
\multicolumn{1}{|c|}{$\sigma$} & {\small MacroF1} & {\small MicroF1} & $\#$data & {\small MacroF1} & {\small MicroF1} & $\#$data & { \small MacroF1} & {\small MicroF1} & $\#$data \\ \hline
\multicolumn{1}{|c|}{0.01} & 55.2 & 64.4 & 171.1 & 46.8 & 53.6 & 161.2 & 35.3 & 47.4 & 161.2 \\
\multicolumn{1}{|c|}{0.05} & 55.8 & 65.2 & 173.3 & 49.0 & 56.0 & 162.7 & 37.7 & 48.9 & 182.0 \\
\multicolumn{1}{|c|}{0.1} & 59.2 & 68.3 & 179.4 & 50.4 & 57.0 & 168.2 & 41.8 & 52.5 & 196.7 \\
\multicolumn{1}{|c|}{0.15} & 60.4 & 69.4 & 189.0 & 52.0 & 58.7 & 175.5 & 46.2 & 56.1 & 220.2 \\
\multicolumn{1}{|c|}{0.2} & 62.6 & 70.9 & 201.7 & 54.3 & 61.0 & 185.2 & 52.4 & 60.9 & 254.1 \\ \hline
\end{tabular}%
}
\end{table}

\section{Comparison of graph centrality metrics} \label{app: centrality}
We evaluate the effectiveness of various graph centrality metrics on Cora dataset. To eliminate the randomness induced by subgraphs, we conduct experiments on the scenario with $n_i=1$ for all $i\in O$.  
Specifically, we substitute the PageRank centrality used in our structural representativeness with three other widely used centrality metrics: degree centrality, closeness centrality, and betweenness centrality \cite{latora2017complex}. We report the test accuracy with varying budgets. 
As shown in Table~\ref{tab:centrality}, PageRank centrality consistently surpasses the other metrics. Consequently, our mechanism employs PageRank centrality as structural representativeness.
\begin{table}[ht!]
\centering
\caption{Comparison of graph centrality metrics ($n_i=1$)} 
\label{tab:centrality}
\footnotesize
\begin{tabular}{|c|cccccc|}
\hline
budget & $50$ & $100$ & $150$ & $200$ & $250$ & $300$ \\ \hline
Degree & 51.8 & 60.3 & 69.2 & 69.2 & 74.2 & 75.0 \\
Closeness & 52.5 & 62.0 & 65.5 & 70.1 & 73.6 & 74.8 \\
Betweenness & 49.2 & 60.9 & 67.0 & 68.5 & 71.7 & 75.6 \\
PageRank & $\bm{52.7}$ & $\bm{64.6}$ & $\bm{69.3}$ & $\bm{71.0}$ & $\bm{75.1}$ & $\bm{77.0}$ \\ \hline
\end{tabular}%
\end{table}

\section{Comparison of subgraph parameters} \label{app:subgraph parameter}
Here we explore the effect of the subgraph size and the number of data owners.  
To reduce the effect of randomness, we use $n_i = 1$ as a specific scenario to measure the effect of other parameters.
The experimental results on the effects due to subgraph size and number of subgraphs are shown in Tables~\ref{tab:subgraph size} and \ref{tab:subgraph number}.
When fixing the budget as $1500$ and the number of data owners as $10$, we find that the larger the subgraph size is, the more edges within the subgraph are missing, which hinders the message passing process and results in worse model performance, as shown in Table~\ref{tab:subgraph size}. 
Then when fixing the budget at $1500$ and the subgraph size at $80$, Table~\ref{tab:subgraph number} shows that the more subgraphs exist, the more edges within subgraphs are missing, resulting in worse model performance.

These experiments underscore the negative effect of missing edges during model training. Consequently, we utilise an edge augmentation technique to alleviate the impact of these missing edges.

\begin{table}[ht]
\centering
\footnotesize
\caption{The  impact of subgraph size ($n_i\geq 1$)}
\label{tab:subgraph size}
\begin{tabular}{|c|c|c|c|}
\hline
size & Cora & Citeseer & Pubmed \\ \hline
20 & $63.1\pm 5.6$ & $55.4\pm 7.5$ & $52.1\pm 15.2$ \\ \hline
40 & $63.4\pm 4.2$ & $56.6\pm 3.6$ & $50.3\pm 11.7$ \\ \hline
80 & $62.5\pm 8.6$ & $50.2\pm 6.3$ & $49.2\pm 11.9$ \\ \hline
160 & $57.3\pm 4.9$ & $46.5\pm 7.0$ & $47.6\pm 8.5$ \\ \hline
\end{tabular}%
\end{table}

\begin{table}[ht]
\centering
\footnotesize
\caption{The  impact of subgraph numbers ($n_i\geq 1$)}
\label{tab:subgraph number}
\begin{tabular}{|c|c|c|c|}
\hline
\#subgraphs & Cora & Citeseer & Pubmed \\ \hline
5 & $65.5\pm 4.3$ & $53.2\pm 7.6$ & $49.7\pm 12.3$ \\ \hline
10 & $62.5\pm 8.6$ & $50.2\pm 6.3$ & $49.2\pm 11.9$ \\ \hline
15 & $62\pm 4.7$ & $50.2\pm 3.8$ & $48.6\pm 14.6$ \\ \hline
20 & $57.3\pm 5.0$ & $47.9\pm 5.0$ & $47.2\pm 11.4$ \\ \hline
\end{tabular}%
\end{table}

\begin{table}[]
\centering
\footnotesize
\caption{The number of bought data under different budgets for $n_i=1$}
\label{apptab:quantity}
\begin{tabular}{|c|c|cccccc|}
\hline
 \multicolumn{2}{|c|}{budget} & {$50$} & {$100$} & {$150$} & {$200$} & {$250$} & $300$ \\ \hline
\multirow{3}{*}{Cora} & Greedy(P) & 67.8 & 129.5 & 188.2 & 245 & 300.3 & 353.8 \\
 & ASCV(P) & 60.7 & 114 & 164.3 & 212.3 & 257.8 & 302 \\
 & {\bf SIMT} & $\bm {55.3}$ & $\bm {107.4}$ & $\bm {157}$ & $\bm {205.9}$ & $\bm {252.3}$ & $\bm {298.3}$ \\ \hline
\multirow{3}{*}{Citeseer} & Greedy(P) & 65.3 & 123.8 & 179.9 & 233.8 & 286.3 & 337.2 \\
 & ASCV(P) & 57.4 & 107 & 154.2 & 198.5 & 241.9 & 282.2 \\
 & {\bf SIMT} & $\bm {51.8}$ & $\bm {100.6}$ & $\bm {147.8}$ & $\bm {193.2}$ & $\bm {236.2}$ & $\bm {279.8}$ \\ \hline
\multirow{3}{*}{Pubmed} & Greedy(P) & 76.5 & 147.1 & 215.4 & 282 & 347.4 & 412.1 \\
 & ASCV(P) & 70.1 & 134.3 & 196 & 255.6 & 314.5 & 370.7 \\
 & {\bf SIMT} & $\bm {60.4}$ & $\bm {116.6}$ & $\bm {171.8}$ & $\bm {225}$ & $\bm {277.7}$ & $\bm {328.5}$ \\ \hline
\end{tabular}%
\end{table}

\begin{table}[ht]
\centering
\caption{The impact of each component ($n_i= 1$)}
\label{apptab:ablationstudy}
\footnotesize
\begin{tabular}{|c|cc|cc|cc|}
\hline
 dataset & \multicolumn{2}{c|}{Cora} & \multicolumn{2}{c|}{Pubmed} & \multicolumn{2}{c|}{Citeseer} \\ \hline
 metric & acc. & $\Delta$ & acc. & $\Delta$ & acc. & $\Delta$ \\ \hline
no cluster & 60.0 & -8.3 & 53.3 & -3.7 & 38.8 & -13.7 \\
no rep & 66.5 & -1.8 & 55.5 & -1.5 & 50.3 & -2.2 \\
no info & 66.5 & -1.8 & 54.9 & -2.1 & 49.6 & -2.9 \\ \hline
{\bf SIMT} & $\bm{68.3}$ &  & $\bm{57.0}$ &  & $\bm{52.5}$ &  \\ \hline
\end{tabular}%
\end{table}

\begin{table*}[t]
\centering
\caption{Node classification performance under different budgets using GCN architecture for $n_i=1$}
\label{apptab:GCN}
\resizebox{\linewidth}{!}{%
\begin{tabular}{|cc|cc|cc|cc|cc|cc|cc|}
\hline
\multicolumn{2}{|c|}{budget} & \multicolumn{2}{c|}{$50$ } & \multicolumn{2}{c|}{$100$} & \multicolumn{2}{c|}{$150$} & \multicolumn{2}{c|}{$200$} & \multicolumn{2}{c|}{$250$} & \multicolumn{2}{c|}{$300$} \\ \hline
\multicolumn{2}{|c|}{metric} & \multicolumn{1}{c}{MacroF1} & \multicolumn{1}{c|}{MicroF1} & \multicolumn{1}{c}{MacroF1} & \multicolumn{1}{c|}{MicroF1} & \multicolumn{1}{c}{MacroF1} & \multicolumn{1}{c|}{MicroF1} & \multicolumn{1}{c}{MacroF1} & \multicolumn{1}{c|}{MicroF1} & \multicolumn{1}{c}{MacroF1} & \multicolumn{1}{c|}{MicroF1} & \multicolumn{1}{c}{MacroF1} & MicroF1 \\ \hline
\multicolumn{1}{|c|}{\multirow{5}{*}{\small Cora}} & Greedy & $11.8\pm 5.5$ & $22.6\pm 8$ & $18.4\pm 7$ & $28.8\pm 9$ & $21.7\pm 6.6$ & $32.7\pm 9.4$ & $25.2\pm 5.6$ & $36.2\pm 6.3$ & $28.5\pm 8.8$ & $40.2\pm 8.5$ & $30.5\pm 9.4$ & $42.5\pm 9.4$ \\
\multicolumn{1}{|c|}{} & ASCV & $11.8\pm 5.4$ & $22.0\pm 7.8$ & $19.9\pm 5.5$ & $32.2\pm 9.2$ & $28.0\pm 6.5$ & $39.6\pm 7.7$ & $32.0\pm 4.6$ & $43.1\pm 4.3$ & $31.5\pm 3.5$ & $43.2\pm 3.8$ & $36.7\pm 5.4$ & $48.1\pm 4.3$ \\
\multicolumn{1}{|c|}{} & Greedy(P) & $26.1\pm 13.3$ & $38.2\pm 13.7$ & $31.3\pm 13$ & $42.3\pm 13.3$ & $34.1\pm 14.6$ & $44.4\pm 13.6$ & $36.8\pm 17.3$ & $47.7\pm 14$ & $38.6\pm 17.5$ & $49.1\pm 14.1$ & $40.1\pm 17$ & $51.0\pm 13.4$ \\
\multicolumn{1}{|c|}{} & ASCV(P) & $31.9\pm 8.3$ & $43.8\pm 11.1$ & $41.6\pm 8$ & $54.3\pm 8.5$ & $48.6\pm 7.7$ & $59.0\pm 7.9$ & $52.2\pm 10.7$ & $63.0\pm 8.8$ & $57.8\pm 9.6$ & $67\pm 7.6$ & $61\pm 12.6$ & $69.5\pm 8.2$ \\
\multicolumn{1}{|c|}{} & {\bf SIMT} & $\bm{39.6\pm 9.9}$ & $\bm{52.7\pm 8.3}$ & $\bm{53.4\pm 7.5}$ & $\bm{64.6\pm 7.7}$ & $\bm{60.0\pm 7.7}$ & $\bm{69.3\pm 6.3}$ & $\bm{63.4\pm 8.8}$ & $\bm{71.0\pm 6.6}$ & $\bm{67.1\pm 7.9}$ & $\bm{75.1\pm 4.3}$ & $\bm{71.9\pm 6.3}$ & $\bm{77.0\pm 4}$ \\ \hline
\multicolumn{1}{|c|}{\multirow{5}{*}{\small Citeseer}} & Greedy & $8.2\pm 4.3$ & $18.6\pm 8.9$ & $11.3\pm 4.9$ & $22.7\pm 8.7$ & $13.5\pm 4.2$ & $24.1\pm 6.8$ & $17.1\pm 5.9$ & $28.2\pm 8$ & $18.4\pm 6.4$ & $28.7\pm 8.2$ & $22.0\pm 7.2$ & $32.2\pm 7.1$ \\
\multicolumn{1}{|c|}{} & ASCV & $10.2\pm 4.4$ & $20.1\pm 5$ & $15\pm 3.8$ & $24\pm 6.2$ & $18.9\pm 3.1$ & $27.2\pm 4.5$ & $24.3\pm 5.6$ & $32.4\pm 7.9$ & $31.2\pm 5.3$ & $38\pm 6.7$ & $36.3\pm 4.5$ & $45.0\pm 4$ \\
\multicolumn{1}{|c|}{} & Greedy(P) & $17\pm 10$ & $27.6\pm 12.5$ & $22.1\pm 10.5$ & $32.6\pm 11.9$ & $25.3\pm 12.5$ & $35.2\pm 12$ & $31.3\pm 14.1$ & $40.4\pm 12.9$ & $33.2\pm 13.5$ & $42.8\pm 12.7$ & $37.6\pm 14.8$ & $47.2\pm 13.5$ \\
\multicolumn{1}{|c|}{} & ASCV(P) & $22.8\pm 9.5$ & $31.1\pm 13$ & $34.2\pm 9.8$ & $41.8\pm 12.9$ & $40.7\pm 7.3$ & $48.8\pm 8.7$ & $49.3\pm 7.4$ & $57.7\pm 6.2$ & $53.2\pm 6.9$ & $59.2\pm 5.6$ & $56.7\pm 6.1$ & $63.1\pm 3.9$ \\
\multicolumn{1}{|c|}{} & {\bf SIMT} & $\bm{33.3\pm 8.6}$ & $\bm{40.3\pm 10.5}$ & $\bm{44.2\pm 3.9}$ & $\bm{51.0\pm 7.2}$ & $\bm{50.9\pm 4.4}$ & $\bm{57.3\pm 6}$ & $\bm{55.0\pm 3.1}$ & $\bm{62.5\pm 3.5}$ & $\bm{57.9\pm 2.1}$ & $\bm{64.2\pm 3.7}$ & $\bm{61.0\pm 2.4}$ & $\bm{66.7\pm 2.3}$ \\ \hline
\multicolumn{1}{|c|}{\multirow{5}{*}{\small Pubmed}} & Greedy & $13.7\pm 3.4$ & $26.4\pm 8.7$ & $13.8\pm 3.4$ & $26.4\pm 8.7$ & $13.9\pm 3.3$ & $26.5\pm 8.6$ & $14.2\pm 3.5$ & $26.7\pm 8.5$ & $14.1\pm 3.4$ & $26.6\pm 8.6$ & $15\pm 4.7$ & $27.3\pm 8.6$ \\
\multicolumn{1}{|c|}{} & ASCV & $16.6\pm 4.5$ & $32.2\pm 9.3$ & $16.3\pm 4$ & $32.3\pm 9.4$ & $16.1\pm 3.8$ & $32.3\pm 9.4$ & $16.2\pm 3.8$ & $32.3\pm 9.4$ & $16.4\pm 3.7$ & $32.4\pm 9.3$ & $19.7\pm 7.9$ & $35.2\pm 9.6$ \\
\multicolumn{1}{|c|}{} & Greedy(P) & $16.3\pm 7.5$ & $28.5\pm 9.6$ & $17.6\pm 9.6$ & $30.0\pm 11.7$ & $19.1\pm 10.1$ & $30.9\pm 11.6$ & $20.7\pm 10.8$ & $32.1\pm 11.6$ & $22.7\pm 12$ & $33.9\pm 12$ & $24.2\pm 14.2$ & $35.2\pm 13.6$ \\
\multicolumn{1}{|c|}{} & ASCV(P) & $17.1\pm 4.5$ & $32.7\pm 9.7$ & $22.1\pm 10.2$ & $36.6\pm 11.1$ & $23.9\pm 12.3$ & $37.8\pm 12.6$ & $28.1\pm 13.4$ & $41.7\pm 12.9$ & $30.6\pm 13.2$ & $43.2\pm 13.7$ & $32.0\pm 13.3$ & $43.9\pm 13.4$ \\
\multicolumn{1}{|c|}{} & {\bf SIMT} & $\bm{29\pm 13.7}$ & $\bm{42\pm 15.9}$ & $\bm{36.9\pm 18.4}$ & $\bm{48.9\pm 17.1}$ & $\bm{40.2\pm 14.7}$ & $\bm{51.8\pm 14.5}$ & $\bm{44.6\pm 19.1}$ & $\bm{54.3\pm 17.8}$ & $\bm{47.1\pm 17.3}$ & $\bm{56.8\pm 15.8}$ & $\bm{52.8\pm 15}$ & $\bm{61.2\pm 12.2}$ \\ \hline
\end{tabular}%
}
\end{table*}

\begin{table*}[ht!]
\centering
\caption{Node classification performance under different budgets using GIN architecture for $n_i=1$}
\label{tab:GIN}
\resizebox{\linewidth}{!}{%
\begin{tabular}{|cc|cc|cc|cc|cc|cc|cc|}
\hline
\multicolumn{2}{|c|}{budget} & \multicolumn{2}{c|}{50} & \multicolumn{2}{c|}{100} & \multicolumn{2}{c|}{150} & \multicolumn{2}{c|}{200} & \multicolumn{2}{c|}{250} & \multicolumn{2}{c|}{300} \\ \hline
\multicolumn{2}{|c|}{metric} & MacroF1 & MicroF1 & MacroF1 & MicroF1 & MacroF1 & MicroF1 & MacroF1 & MicroF1 & MacroF1 & MicroF1 & MacroF1 & MicroF1 \\ \hline
\multicolumn{1}{|c|}{\multirow{5}{*}{Cora}} & Greedy & $20.5\pm 6.1$ & $30.8\pm 8.1$ & $28.4\pm 7.4$ & $38.5\pm 6.8$ & $31.2\pm 9.1$ & $40.8\pm 7.8$ & $33.5\pm 12.2$ & $43.7\pm 9.5$ & $36.8\pm 11.1$ & $45.9\pm 8.5$ & $39.1\pm 11.2$ & $48.8\pm 7.7$ \\
\multicolumn{1}{|c|}{} & ASCV & $22.8\pm 8.9$ & $29.6\pm 6.9$ & $36.0\pm 8.5$ & $40.4\pm 5.8$ & $46.2\pm 7.9$ & $50.0\pm 7.1$ & $49.9\pm 8.9$ & $53.0\pm 6.9$ & $56.9\pm 7.0$ & $59.4\pm 5.0$ & $60.8\pm 7.7$ & $62.7\pm 5.5$ \\
\multicolumn{1}{|c|}{} & Greedy(P) & $29.3\pm 13.5$ & $41.6\pm 12.1$ & $35.2\pm 15.5$ & $46.7\pm 12.4$ & $35.9\pm 15.9$ & $47.4\pm 13.3$ & $40.3\pm 15.3$ & $50.5\pm 12.8$ & $42.8\pm 14.7$ & $52.3\pm 12.2$ & $46.0\pm 14.4$ & $55.2\pm 11.0$ \\
\multicolumn{1}{|c|}{} & ASCV(P) & $35.9\pm 11.8$ & $45.7\pm 10.5$ & $49.7\pm 9.0$ & $56.4\pm 7.1$ & $55.9\pm 6.5$ & $61.3\pm 3.9$ & $68.2\pm 5.6$ & $71.2\pm 3.9$ & $71.1\pm 5.7$ & $73.7\pm 4.0$ & $74.5\pm 3.9$ & $76.1\pm 3.9$ \\
\multicolumn{1}{|c|}{} & {\bf SIMT} & $\bm{47.1\pm 10}$ & $\bm{57.4\pm 9.9}$ & $\bm{58.1\pm 7.8}$ & $\bm{65.7\pm 5.5}$ & $\bm{66.8\pm 7.7}$ & $\bm{73.1\pm 4.3}$ & $\bm{72.8\pm 5.4}$ & $\bm{76.5\pm 4.6}$ & $\bm{72.7\pm 7.1}$ & $\bm{77.6\pm 4.0}$ & $\bm{75\pm 3.9}$ & $\bm{78.2\pm 3.6}$ \\ \hline
\multicolumn{1}{|c|}{\multirow{5}{*}{Citeseer}} & Greedy & $10.2\pm 5.3$ & $18.1\pm 9.5$ & $16.1\pm 7.9$ & $24.4\pm 10.1$ & $19.8\pm 8.0$ & $26.4\pm 8.8$ & $26.7\pm 9.8$ & $33.5\pm 12.1$ & $30.4\pm 9.4$ & $37.4\pm 11.9$ & $31.3\pm 9.6$ & $37.5\pm 10.2$ \\
\multicolumn{1}{|c|}{} & ASCV & $13.1\pm 2.9$ & $20.5\pm 6.3$ & $20.4\pm 4.6$ & $24.5\pm 8.8$ & $32.4\pm 4.2$ & $34.1\pm 8.3$ & $35.8\pm 5.4$ & $35.1\pm 6.5$ & $43.0\pm 4.9$ & $45.0\pm 7.2$ & $47.4\pm 3.9$ & $50.4\pm 5.0$ \\
\multicolumn{1}{|c|}{} & Greedy(P) & $22.1\pm 12.2$ & $32.1\pm 13.7$ & $26.9\pm 11.6$ & $36.1\pm 11.7$ & $30.9\pm 12.3$ & $39.6\pm 11$ & $36.1\pm 11.7$ & $44.3\pm 11.2$ & $38.1\pm 11.6$ & $45.8\pm 11.3$ & $41.4\pm 11.8$ & $49.1\pm 11.3$ \\
\multicolumn{1}{|c|}{} & ASCV(P) & $27.6\pm 9.7$ & $35.8\pm 12.8$ & $38.5\pm 10.1$ & $45.4\pm 11.9$ & $47.6\pm 8.0$ & $52.3\pm 9.3$ & $52.9\pm 5.5$ & $58.1\pm 4.8$ & $54.9\pm 4.5$ & $60.3\pm 4.3$ & $57.0\pm 4.6$ & $62.4\pm 2.8$ \\
\multicolumn{1}{|c|}{} & {\bf SIMT} & $\bm{39.2\pm 8.2}$ & $\bm{44.7\pm 10.9}$ & $\bm{49.5\pm 4.5}$ & $\bm{55.9\pm 4.5}$ & $\bm{52.4\pm 4.0}$ & $\bm{56.9\pm 4.6}$ & $\bm{56.2\pm 3.3}$ & $\bm{61.4\pm 3.6}$ & $\bm{57.5\pm 4.4}$ & $\bm{61.8\pm 5.2}$ & $\bm{60.8\pm 3.7}$ & $\bm{65.4\pm 3.2}$ \\ \hline
\multicolumn{1}{|c|}{\multirow{5}{*}{Pubmed}} & Greedy & $15.0\pm 4.6$ & $27.4\pm 8.6$ & $16.6\pm 7.8$ & $29.1\pm 10.6$ & $17.1\pm 8.6$ & $29.4\pm 10.9$ & $17.6\pm 9.0$ & $29.7\pm 11.1$ & $18.2\pm 9.5$ & $30.0\pm 11.1$ & $19.3\pm 9.8$ & $30.7\pm 11.2$ \\
\multicolumn{1}{|c|}{} & ASCV & $16.6\pm 4.2$ & $32.4\pm 9.5$ & $18.0\pm 4.6$ & $33.1\pm 9.4$ & $18.9\pm 4.8$ & $33.5\pm 9.2$ & $20.5\pm 6.5$ & $34.2\pm 9.9$ & $23.1\pm 8.6$ & $36.9\pm 10.1$ & $24.6\pm 9.9$ & $37.8\pm 10.9$ \\
\multicolumn{1}{|c|}{} & Greedy(P) & $18.4\pm 9.5$ & $30.0\pm 11.4$ & $21.8\pm 14.7$ & $32.2\pm 14.0$ & $23.8\pm 17.3$ & $33.6\pm 15.7$ & $24.7\pm 17.1$ & $34.2\pm 15.3$ & $25.0\pm 15.7$ & $34.4\pm 14.3$ & $26.8\pm 17.3$ & $36.0\pm 15.3$ \\
\multicolumn{1}{|c|}{} & ASCV(P) & $23.5\pm 11.4$ & $36.4\pm 12.5$ & $25.3\pm 10.9$ & $37.6\pm 12.4$ & $29.4\pm 12.3$ & $40.6\pm 13.0$ & $31.1\pm 12.2$ & $42.0\pm 12.6$ & $33.5\pm 13.8$ & $43.4\pm 13.0$ & $36.5\pm 15.6$ & $45.5\pm 14.2$ \\
\multicolumn{1}{|c|}{} & {\bf SIMT} & $\bm{39.1\pm 21.1}$ & $\bm{49.1\pm 19.6}$ & $\bm{44.4\pm 19.9}$ & $\bm{52.8\pm 18.1}$ & $\bm{48.9\pm 20.9}$ & $\bm{56.4\pm 19.1}$ & $\bm{52.6\pm 19.5}$ & $\bm{59.0\pm 18.2}$ & $\bm{55.1\pm 17.1}$ & $\bm{61.4\pm 15.5}$ & $\bm{58.1\pm 13.9}$ & $\bm{63.3\pm 13.6}$ \\ \hline
\end{tabular}%
}
\end{table*}

\begin{table*}[ht!]
\centering
\caption{Node classification performance under different budgets using GraphSage architecture for $n_i=1$}
\label{tab:GraphSage}
\resizebox{\linewidth}{!}{%
\begin{tabular}{|cc|cc|cc|cc|cc|cc|cc|}
\hline
\multicolumn{2}{|c|}{budget} & \multicolumn{2}{c|}{50} & \multicolumn{2}{c|}{100} & \multicolumn{2}{c|}{150} & \multicolumn{2}{c|}{200} & \multicolumn{2}{c|}{250} & \multicolumn{2}{c|}{300} \\ \hline
\multicolumn{2}{|c|}{metric} & MacroF1 & MicroF1 & MacroF1 & MicroF1 & MacroF1 & MicroF1 & MacroF1 & MicroF1 & MacroF1 & MicroF1 & MacroF1 & MicroF1 \\ \hline
\multicolumn{1}{|c|}{\multirow{5}{*}{Cora}} & Greedy & $16.7\pm 3.8$ & $19.8\pm 2.7$ & $22.7\pm 5.4$ & $26.1\pm 4.7$ & $26.6\pm 6.4$ & $29.7\pm 6.0$ & $30.2\pm 7.0$ & $32.7\pm 4.9$ & $31.8\pm 6.9$ & $32.8\pm 5.0$ & $34.8\pm 9.0$ & $37.6\pm 8.0$ \\
\multicolumn{1}{|c|}{} & ASCV & $15.2\pm 2.3$ & $18.1\pm 2.0$ & $23.9\pm 4.2$ & $24.2\pm 3.2$ & $29.8\pm 6.4$ & $29.4\pm 4.1$ & $31.6\pm 6.6$ & $30.1\pm 6.3$ & $37.2\pm 7.9$ & $34.7\pm 6.7$ & $40.6\pm 5.1$ & $37.9\pm 6.2$ \\
\multicolumn{1}{|c|}{} & Greedy(P) & $36.8\pm 12.4$ & $46.9\pm 12.3$ & $42.9\pm 15.4$ & $52.9\pm 14.2$ & $43.7\pm 15.8$ & $52.8\pm 14.7$ & $47.3\pm 16.2$ & $56.3\pm 14.1$ & $47.8\pm 15.0$ & $56.6\pm 11.9$ & $49.5\pm 14.5$ & $58.5\pm 11.2$ \\
\multicolumn{1}{|c|}{} & ASCV(P) & $43.9\pm 9.7$ & $51.6\pm 7.4$ & $55.1\pm 5.3$ & $60.1\pm 6.2$ & $63.2\pm 8.3$ & $66.9\pm 6.5$ & $67.7\pm 6.6$ & $70.0\pm 5.7$ & $71.6\pm 5.1$ & $73.7\pm 4.3$ & $72.0\pm 4.2$ & $73.9\pm 3.4$ \\
\multicolumn{1}{|c|}{} & {\bf SIMT} & $\bm{50.3\pm 6.4}$ & $\bm{57.6\pm 6.6}$ & $\bm{64.1\pm 7.4}$ & $\bm{67.7\pm 4.9}$ & $\bm{65.9\pm 5.1}$ & $\bm{69.1\pm 2.9}$ & $\bm{70.9\pm 3.1}$ & $\bm{73.8\pm 2.3}$ & $\bm{73.5\pm 3.6}$ & $\bm{76.2\pm 3.3}$ & $\bm{73.6\pm 3.0}$ & $\bm{76.7\pm 2.4}$ \\ \hline
\multicolumn{1}{|c|}{\multirow{5}{*}{Citeseer}} & Greedy & $15.8 \pm 3.7$ & $20.1 \pm 6.1$ & $20.1 \pm 1.7$ & $25.6 \pm 3.5$ & $24.2 \pm 3.1$ & $30.0 \pm 4.9$ & $26.1 \pm 4.1$ & $32.1 \pm 5.3$ & $29.2 \pm 4.5$ & $33.3 \pm 4.2$ & $31.4 \pm 4.8$ & $35.9 \pm 5.2$ \\
\multicolumn{1}{|c|}{} & ASCV & $16.7 \pm 3.2$ & $19.7 \pm 3.5$ & $22.4 \pm 2.6$ & $24.5 \pm 3.9$ & $25.0 \pm 2.2$ & $25.9 \pm 3.0$ & $29.6 \pm 3.5$ & $30.0 \pm 4.7$ & $32.5 \pm 2.6$ & $34.5 \pm 4.3$ & $34.9 \pm 3.9$ & $34.9 \pm 5.8$ \\
\multicolumn{1}{|c|}{} & Greedy(P) & $30.1 \pm 8.5$ & $38.5 \pm 9.3$ & $38.3 \pm 9.8$ & $46.1 \pm 9.2$ & $43.2 \pm 9.8$ & $50.8 \pm 8.8$ & $45.5 \pm 9.6$ & $53.8 \pm 9.3$ & $49.4 \pm 10.9$ & $57.7 \pm 9.4$ & $50.2 \pm 10.4$ & $58.6 \pm 8.9$ \\
\multicolumn{1}{|c|}{} & ASCV(P) & $35.2 \pm 8.8$ & $40.6 \pm 10.5$ & $46.9 \pm 7.7$ & $53.4 \pm 7.3$ & $55.2 \pm 6.4$ & $\bm{59.6 \pm 5.3}$ & $56.0 \pm 5.8$ & $60.8 \pm 3.6$ & $57.6 \pm 3.0$ & $62.4 \pm 3.1$ & $\bm{60.9 \pm 2.8}$ & $65.0 \pm 3.0$ \\
\multicolumn{1}{|c|}{} & {\bf SIMT} & $\bm{47.6 \pm 7.0}$ & $\bm{52.8 \pm 7.5}$ & $\bm{54.0 \pm 3.1}$ & $\bm{57.4 \pm 3.0}$ & $\bm{56.3 \pm 3.1}$ & $59.4 \pm 3.0$ & $\bm{58.4 \pm 2.7}$ & $\bm{62.3 \pm 2.8}$ & $\bm{59.9 \pm 1.8}$ & $\bm{63.2 \pm 2.3}$ & $60.7 \pm 3.2$ & $\bm{66.3 \pm 2.0}$ \\ \hline
\multicolumn{1}{|c|}{\multirow{5}{*}{Pubmed}} & Greedy & $25.8 \pm 4.1$ & $31.4 \pm 5.0$ & $27.7 \pm 2.7$ & $35.9 \pm 2.9$ & $26.3 \pm 2.8$ & $37.6 \pm 4.8$ & $25.9 \pm 3.9$ & $38.5 \pm 6.3$ & $27.0 \pm 4.9$ & $38.9 \pm 5.8$ & $27.5 \pm 3.4$ & $39.9 \pm 5.3$ \\
\multicolumn{1}{|c|}{} & ASCV & $15.2 \pm 2.3$ & $18.1 \pm 2.0$ & $23.9 \pm 4.2$ & $24.2 \pm 3.2$ & $29.8 \pm 6.4$ & $29.4 \pm 4.1$ & $31.6 \pm 6.6$ & $30.1 \pm 6.3$ & $37.2 \pm 7.9$ & $34.7 \pm 6.7$ & $40.6 \pm 5.1$ & $37.9 \pm 6.2$ \\
\multicolumn{1}{|c|}{} & Greedy(P) & $36.8 \pm 12.4$ & $46.9 \pm 12.3$ & $42.9 \pm 15.4$ & $52.9 \pm 14.2$ & $43.7 \pm 15.8$ & $52.8 \pm 14.7$ & $47.3 \pm 16.2$ & $56.3 \pm 14.1$ & $47.8 \pm 15.0$ & $56.6 \pm 11.9$ & $49.5 \pm 14.5$ & $58.5 \pm 11.2$ \\
\multicolumn{1}{|c|}{} & ASCV(P) & $47.5 \pm 8.6$ & $53.8 \pm 8.2$ & $46.9 \pm 12.0$ & $52.6 \pm 10.6$ & $49.8 \pm 12.9$ & $54.4 \pm 11.2$ & $50.7 \pm 14.0$ & $56.0 \pm 12.3$ & $57.4 \pm 13.7$ & $61.0 \pm 11.4$ & $57.8 \pm 14.1$ & $60.8 \pm 11.8$ \\
\multicolumn{1}{|c|}{} & {\bf SIMT} & $\bm{56.8 \pm 9.3}$ & $\bm{60.5 \pm 7.5}$ & $\bm{61.0 \pm 12.5}$ & $\bm{63.5 \pm 10.2}$ & $\bm{62.9 \pm 13.0}$ & $\bm{65.3 \pm 11.2}$ & $\bm{65.1 \pm 10.1}$ & $\bm{66.9 \pm 8.8}$ & $\bm{68.4 \pm 5.6}$ & $\bm{69.9 \pm 5.0}$ & $\bm{67.1 \pm 10.5}$ & $\bm{68.7 \pm 9.7}$ \\ \hline
\end{tabular}%
}
\end{table*}

\begin{table*}[ht!]
\centering
\caption{Node classification performance under different budgets using GAT architecture for $n_i=1$}
\label{tab:GAT}
\resizebox{\linewidth}{!}{%
\begin{tabular}{|cc|cc|cc|cc|cc|cc|cc|}
\hline
\multicolumn{2}{|c|}{budget} & \multicolumn{2}{c|}{50} & \multicolumn{2}{c|}{100} & \multicolumn{2}{c|}{150} & \multicolumn{2}{c|}{200} & \multicolumn{2}{c|}{250} & \multicolumn{2}{c|}{300} \\ \hline
\multicolumn{2}{|c|}{metric} & MacroF1 & MicroF1 & MacroF1 & MicroF1 & MacroF1 & MicroF1 & MacroF1 & MicroF1 & MacroF1 & MicroF1 & MacroF1 & MicroF1 \\ \hline
\multicolumn{1}{|c|}{\multirow{5}{*}{Cora}} & Greedy & $20.0 \pm 8.6$ & $29.2 \pm 9.6$ & $25.7 \pm 6.8$ & $34.5 \pm 9.0$ & $28.6 \pm 9.7$ & $37.1 \pm 10.3$ & $31.8 \pm 11.9$ & $40.5 \pm 11.3$ & $34.6 \pm 12.0$ & $44.6 \pm 9.4$ & $36.4 \pm 11.5$ & $45.9 \pm 9.4$ \\
\multicolumn{1}{|c|}{} & ASCV & $20.9 \pm 8.0$ & $26.0 \pm 9.0$ & $34.2 \pm 8.9$ & $38.8 \pm 7.1$ & $39.6 \pm 7.3$ & $45.0 \pm 4.9$ & $47.7 \pm 9.3$ & $52.2 \pm 5.6$ & $53.9 \pm 6.5$ & $56.5 \pm 4.4$ & $57.6 \pm 5.5$ & $59.8 \pm 4.0$ \\
\multicolumn{1}{|c|}{} & Greedy(P) & $25.1 \pm 11.7$ & $35.8 \pm 12.1$ & $30.7 \pm 13.8$ & $41.1 \pm 13.5$ & $33.6 \pm 15.0$ & $43.7 \pm 14.5$ & $36.6 \pm 18.4$ & $46.6 \pm 16.5$ & $38.1 \pm 16.4$ & $47.7 \pm 13.5$ & $39.8 \pm 16.7$ & $50.3 \pm 13.7$ \\
\multicolumn{1}{|c|}{} & ASCV(P) & $30.8 \pm 11.4$ & $36.6 \pm 9.5$ & $41.5 \pm 12.0$ & $49.6 \pm 9.9$ & $50.3 \pm 11.2$ & $57.2 \pm 8.7$ & $60.2 \pm 8.5$ & $64.8 \pm 7.5$ & $63.2 \pm 6.4$ & $67.8 \pm 5.5$ & $70.2 \pm 6.0$ & $72.9 \pm 6.1$ \\
\multicolumn{1}{|c|}{} & {\bf SIMT} & $\bm{35.2 \pm 6.8}$ & $\bm{45.9 \pm 8.5}$ & $\bm{49.2 \pm 7.8}$ & $\bm{59.4 \pm 7.1}$ & $\bm{57.3 \pm 6.2}$ & $\bm{65.7 \pm 5.0}$ & $\bm{62.7 \pm 8.2}$ & $\bm{70.4 \pm 5.5}$ & $\bm{69.2 \pm 6.6}$ & $\bm{74.3 \pm 4.5}$ & $\bm{71.7 \pm 5.4}$ & $\bm{76.1 \pm 4.3}$ \\ \hline
\multicolumn{1}{|c|}{\multirow{5}{*}{Citeseer}} & Greedy & $11.6 \pm 5.0$ & $19.8 \pm 8.3$ & $16.9 \pm 7.6$ & $25.7 \pm 10.0$ & $20.3 \pm 8.9$ & $28.2 \pm 8.9$ & $25.6 \pm 10.5$ & $33.0 \pm 10.5$ & $28.1 \pm 11.4$ & $34.6 \pm 10.3$ & $29.5 \pm 11.5$ & $37.5 \pm 9.4$ \\
\multicolumn{1}{|c|}{} & ASCV & $12.4 \pm 3.7$ & $19.5 \pm 5.3$ & $20.7 \pm 4.8$ & $26.0 \pm 7.7$ & $28.7 \pm 5.9$ & $34.7 \pm 7.2$ & $34.9 \pm 3.9$ & $38.9 \pm 5.6$ & $38.7 \pm 6.2$ & $44.3 \pm 3.5$ & $44.0 \pm 5.3$ & $49.9 \pm 3.6$ \\
\multicolumn{1}{|c|}{} & Greedy(P) & \multicolumn{1}{l}{$15.3 \pm 7.8$} & \multicolumn{1}{l|}{$25.6 \pm 9.8$} & \multicolumn{1}{l}{$20.2 \pm 9.4$} & \multicolumn{1}{l|}{$31.1 \pm 10.4$} & \multicolumn{1}{l}{$24.0 \pm 11.9$} & \multicolumn{1}{l|}{$34.1 \pm 10.6$} & \multicolumn{1}{l}{$30.7 \pm 13.4$} & \multicolumn{1}{l|}{$39.9 \pm 11.9$} & \multicolumn{1}{l}{$32.5 \pm 13.0$} & \multicolumn{1}{l|}{$41.7 \pm 11.5$} & \multicolumn{1}{l}{$36.6 \pm 14.5$} & \multicolumn{1}{l|}{$46.0 \pm 13.0$} \\
\multicolumn{1}{|c|}{} & ASCV(P) & $19.8 \pm 8.1$ & $28.3 \pm 9.8$ & $31.7 \pm 8.0$ & $38.7 \pm 9.9$ & $41.7 \pm 5.1$ & $48.1 \pm 6.0$ & $47.5 \pm 8.6$ & $54.6 \pm 8.3$ & $53.7 \pm 7.4$ & $60.7 \pm 6.2$ & $55.8 \pm 7.0$ & $62.7 \pm 5.0$ \\
\multicolumn{1}{|c|}{} & {\bf SIMT} & $\bm{31.5 \pm 8.9}$ & $\bm{36.7 \pm 9.8}$ & $\bm{46.0 \pm 7.5}$ & $\bm{51.5 \pm 8.1}$ & $\bm{49.7 \pm 4.1}$ & $\bm{56.4 \pm 3.8}$ & $\bm{53.3 \pm 4.0}$ & $\bm{60.6 \pm 4.2}$ & $\bm{58.0 \pm 3.8}$ & $\bm{63.8 \pm 4.7}$ & $\bm{59.1 \pm 2.1}$ & $\bm{66.0 \pm 1.9}$ \\ \hline
\multicolumn{1}{|c|}{\multirow{5}{*}{Pubmed}} & Greedy & \multicolumn{1}{l}{$16.2 \pm 6.2$} & \multicolumn{1}{l|}{$29.0 \pm 9.9$} & \multicolumn{1}{l}{$16.6 \pm 5.3$} & \multicolumn{1}{l|}{$28.5 \pm 8.6$} & \multicolumn{1}{l}{$21.7 \pm 10.6$} & \multicolumn{1}{l|}{$33.6 \pm 11.0$} & \multicolumn{1}{l}{$23.0 \pm 12.4$} & \multicolumn{1}{l|}{$34.9 \pm 12.3$} & \multicolumn{1}{l}{$22.3 \pm 11.1$} & \multicolumn{1}{l|}{$33.6 \pm 11.6$} & \multicolumn{1}{l}{$22.0 \pm 11.5$} & \multicolumn{1}{l|}{$33.6 \pm 11.4$} \\
\multicolumn{1}{|c|}{} & ASCV & $17.1 \pm 4.7$ & $32.6 \pm 9.6$ & $20.3 \pm 7.0$ & $35.2 \pm 9.6$ & $23.3 \pm 7.8$ & $37.5 \pm 8.7$ & $23.5 \pm 8.6$ & $37.6 \pm 8.7$ & $23.8 \pm 8.6$ & $37.7 \pm 9.0$ & $28.8 \pm 10.4$ & $42.2 \pm 8.7$ \\
\multicolumn{1}{|c|}{} & Greedy(P) & $16.9 \pm 9.1$ & $29.4 \pm 11.0$ & $15.2 \pm 3.8$ & $27.2 \pm 8.4$ & $15.3 \pm 4.8$ & $27.2 \pm 8.7$ & $17.8 \pm 9.9$ & $30.0 \pm 12.0$ & $19.1 \pm 7.1$ & $30.0 \pm 8.3$ & $21.0 \pm 10.3$ & $32.3 \pm 10.2$ \\
\multicolumn{1}{|c|}{} & ASCV(P) & $17.2 \pm 4.9$ & $32.7 \pm 9.7$ & $17.8 \pm 5.5$ & $32.9 \pm 9.9$ & $21.5 \pm 9.7$ & $36.1 \pm 10.4$ & $24.7 \pm 11.6$ & $38.2 \pm 12.1$ & $25.6 \pm 11.4$ & $38.7 \pm 12.3$ & $29.6 \pm 11.3$ & $41.3 \pm 10.5$ \\
\multicolumn{1}{|c|}{} & {\bf SIMT} & $\bm{19.5 \pm 5.5}$ & $\bm{34.4 \pm 9.3}$ & $\bm{30.0 \pm 12.1}$ & $\bm{43.8 \pm 13.0}$ & $\bm{32.5 \pm 14.5}$ & $\bm{43.9 \pm 14.7}$ & $\bm{32.0 \pm 13.3}$ & $\bm{42.9 \pm 12.0}$ & $\bm{41.9 \pm 14.0}$ & $\bm{52.0 \pm 12.9}$ & $\bm{44.5 \pm 20.4}$ & $\bm{54.9 \pm 19.3}$ \\ \hline
\end{tabular}%
}
\end{table*}

\section{Comparison of GNN architectures}
\label{app:models}

To test the effect of GNN architectures, we also conduct experiments on scenario with $n_i=1, i\in O$ using three representative GNN architectures: GIN \cite{xu2018powerful}, GraphSage \cite{hamilton2017inductive}, and GAT \cite{velivckovic2017graph}. These architectures have emerged as alternatives to the traditional GCN \cite{kipf2016semi} and have gained widespread acceptance within the GNN community. The configuration of these architectures are the same as that of GCN as shown in GNN models part in  Sec.~\ref{sec:setup}. 
To eliminate the randomness induced by subgraphs, we conduct experiments on the case where all data owners are individuals.

The experiment results are shown in Tables~\ref{apptab:quantity}, \ref{apptab:GCN}, \ref{tab:GIN}, \ref{tab:GraphSage}, and~\ref{tab:GAT}.
In general, SIMT outperforms its benchmarks in all experiments. 
When compared to the Greedy mechanism, SIMT exhibits an improvement ranging from $10\%$ to $ 40\%$ in both MacroF1 and MicroF1, consistently confirming the effectiveness of SIMT.
However, a minor fluctuation is observed when compared to ASCV(P) within the GraphSage architecture on the Citeseer dataset.
More specifically, when the budget is 1500, SIMT demonstrates a $1.1\%$ improvement in MacroF1, albeit a $0.2\%$ decrease in MicroF1. Furthermore, when the budget is increased to 3000, SIMT shows a slight decrease by $0.2\%$ in MacroF1, but a notable increase by $1.3\%$ in MicroF1. Consequently, this fluctuation does not detract from the overall effectiveness of SIMT.

When comparing the effects of different architectures, it is observed that the GraphSage architecture generally outperforms the others. This superior performance is particularly noticeable when the budget is low. Additionally, the GraphSage architecture exhibits lower variance, which implies robust performance. This could potentially be attributed to the fact that GraphSage does not heavily depend on node features and is well suited to spectral features \cite{velivckovic2017graph}. This compatibility allows it to effectively leverage the spectral properties of the data, potentially contributing to its robust performance across various experiments. 
Furthermore, when the budget is set to 3000, the GraphSage architecture underperforms the GIN architecture on the Cora dataset. This could potentially be attributed to the theoretically stronger learnability of GIN compared to GraphSage \cite{xu2018powerful}.
Last, the GAT architecture performs the worst among the four architectures. This could be due to the fact that GAT employs an attention mechanism, which has significantly more parameters and thus requires a sufficient amount of feature and label data for parameter training.

\section{Ablation study for $n_i=1$} \label{appsec:ablation}
We conduct ablation studies across the three datasets and present the average test accuracy for $n_i=1$. 
As shown in Table~\ref{apptab:ablationstudy}, {\bf (1)} the accuracy of SIMT without clustering diminishes by $8.3\%$ in Cora, $3.7\%$ in Citeseer, and $13.7\%$ in Pubmed. This significant reduction underscores the crucial role of structural clustering. Without clustering, there is a high probability that the procured data are unevenly distributed across the classes, leading to a biased training dataset.
{\bf (2)} The accuracy of SIMT without structural informativeness drops by $1.8\%$ in Cora, $2.1\%$ in Citeseer, and $2.9\%$ in Pubmed. {\bf (3)} The accuracy reduction for SIMT without structural representativeness is $1.8\%$ in Cora, $1.5\%$ in Citeseer, and $2.2\%$ in Pubmed. Overall, each component distinctly enhances SIMT's performance.

In the scenario with $n_i=1, i\in O$, the results show similar trends as the scenario with $n_i\geq 1, i\in O$: SIMT performs the best under all budgets, followed by ASCV/ASCV(P), and Greedy/Greedy(P) performs the worst. It is worth to notice that SIMT performs even better when $n_i=1, i\in O$ than when $n_i\geq 1, i\in O$. In particular, SIMT improves $20\% \sim 40\%$ in both MacroF1 and MicroF1 in this case; See more details in Table~\ref{apptab:GCN} of App.~\ref{app:models}. This can be attributed to that when $n_i=1, i\in O$, fewer edges are missing within subgraphs, allowing the structural importance scores calculated by our mechanism to more accurately reflect the true structural importance of each node.

\section{Limitations and future work}
Even though our SIMT mechanism shows the incentive properties theoretically and the accuracy empirically, we acknowledge certain limitations that offer avenues for future research. 
\begin{itemize}
    \item This paper addresses issues related to pre-purchase inspections. However, there are concerns about privacy breaches, as the raw data shared with data brokers and the trained models returned to data consumers could expose private information of the data owners. Future work could enhance the SIMT mechanism by incorporating considerations of these privacy issues
    \item  In the design of the SIMT mechanism, we assess the structural importance once and subsequently make a single purchasing decision for all data owners. An alternative approach could involve an adaptive purchasing method, where data points are bought in several times, each time reassessing the structural importance in light of newly acquired data. The primary rationale for the existing design is efficiency compared to this adaptive method. Nonetheless, developing an efficient adaptive model training mechanism presents an exciting avenue for future research.
    \item The efficacy of the SIMT mechanism heavily depends on the quality of clustering; more accurate structural importance assessment is achieved when clusters are closer to the true classes. Our SIMT mechanism employs a classical VGAE model \cite{kipf2016variational} as the $\mathsf{Clustering}$. However, our findings indicate that this approach is not always robust. Exploring a broader range of GNN models for clustering might yield more stable results and enhance the performance of the SIMT mechanism, although such investigations fall outside the scope of this paper
\end{itemize}


\end{document}